\documentclass[a4paper,onecolumn,accepted=2021-06-24]{quantumarticle}
\pdfoutput=1
 \usepackage[utf8]{inputenc}
\usepackage[english]{babel}
\usepackage[T1]{fontenc}
\usepackage{amsmath}
\usepackage{hyperref}
\usepackage[numbers,sort&compress]{natbib}
%%%%%%%%%%%%%%%%%%%%
\usepackage{bm}
 \usepackage{algorithm,algorithmic}

\usepackage{mathtools}
\usepackage{amsmath}
\usepackage[shortlabels]{enumitem}
%%%%%%%%%%%%%%%%%%%%

\usepackage{graphicx,epic,eepic,epsfig,amsmath,latexsym,amssymb,verbatim,color}
 
\usepackage{amsfonts}       % blackboard math symbols
\usepackage{nicefrac}       % compact symbols for 1/2, etc.

\usepackage{amsmath}
\usepackage{bbm}

\usepackage{float}
%for figures
\usepackage{tikz}
\usetikzlibrary{chains}
\usetikzlibrary{fit}
\usepackage{pgflibraryarrows}		%optional
\usepackage{pgflibrarysnakes}		%optional

\usepackage{epsfig}
\usetikzlibrary{shapes.symbols,patterns} % for source symbols
\usepackage{pgfplots}

\usepackage[strict]{changepage}

\usepackage[marginal]{footmisc}
\usepackage{url}
\usepackage{theorem}
\newtheorem{definition}{Definition}
\newtheorem{proposition}[definition]{Proposition}
\newtheorem{lemma}[definition]{Lemma}

\newtheorem{theorem}[definition]{Theorem}

%%% table

\def\squareforqed{\hbox{\rlap{$\sqcap$}$\sqcup$}}
\def\qed{\ifmmode\squareforqed\else{\unskip\nobreak\hfil
\penalty50\hskip1em\null\nobreak\hfil\squareforqed
\parfillskip=0pt\finalhyphendemerits=0\endgraf}\fi}
\def\endenv{\ifmmode\;\else{\unskip\nobreak\hfil
\penalty50\hskip1em\null\nobreak\hfil\;
\parfillskip=0pt\finalhyphendemerits=0\endgraf}\fi}
\newenvironment{proof}{\noindent \textbf{{Proof~} }}{\hfill $\blacksquare$}

\newcounter{remark}

\newcounter{example}

% Align := properly in math mode
\mathchardef\ordinarycolon\mathcode`\:
\mathcode`\:=\string"8000
\def\vcentcolon{\mathrel{\mathop\ordinarycolon}}
\begingroup \catcode`\:=\active
  \lowercase{\endgroup
  \let :\vcentcolon
  }

%%% boxes
\usepackage{cleveref}
\usepackage{graphicx}
\usepackage{xcolor}

\RequirePackage[framemethod=default]{mdframed}
\newmdenv[skipabove=7pt,
skipbelow=7pt,
backgroundcolor=darkblue!15,
innerleftmargin=5pt,
innerrightmargin=5pt,
innertopmargin=5pt,
leftmargin=0cm,
rightmargin=0cm,
innerbottommargin=5pt,
linewidth=1pt]{tBox}

\newmdenv[skipabove=7pt,
skipbelow=7pt,
backgroundcolor=blue2!25,
innerleftmargin=5pt,
innerrightmargin=5pt,
innertopmargin=5pt,
leftmargin=0cm,
rightmargin=0cm,
innerbottommargin=5pt,
linewidth=1pt]{dBox}
\newmdenv[skipabove=7pt,
skipbelow=7pt,
backgroundcolor=darkkblue!15,
innerleftmargin=5pt,
innerrightmargin=5pt,
innertopmargin=5pt,
leftmargin=0cm,
rightmargin=0cm,
innerbottommargin=5pt,
linewidth=1pt]{sBox}
\definecolor{darkblue}{RGB}{0,76,156}
\definecolor{darkkblue}{RGB}{0,0,153}
\definecolor{blue2}{RGB}{102,178,255}
\definecolor{darkred}{RGB}{195,0,0}
%%% adaptive paranthesis
% \makeatletter
% \def\resetMathstrut@{%
%     \setbox\z@\hbox{%
%         \mathchardef\@tempa\mathcode`\[\relax
%         \def\@tempb##1"##2##3{\the\textfont"##3\char"}%
%         \expandafter\@tempb\meaning\@tempa \relax
%     }%
%     \ht\Mathstrutbox@\ht\z@ \dp\Mathstrutbox@\dp\z@}
% \makeatother
% \begingroup
% \catcode`(\active \xdef({\left\string(}
% \catcode`)\active \xdef){\right\string)}
% \endgroup
% \mathcode`(="8000 \mathcode`)="8000

\newcommand{\nc}{\newcommand}
\nc{\rnc}{\renewcommand}
\nc{\beg}{\begin{equation}}
\nc{\eeq}{{\end{equation}}}
\nc{\beqa}{\begin{eqnarray}}
\nc{\eeqa}{\end{eqnarray}}
\nc{\lbar}[1]{\overline{#1}}
\nc{\bra}[1]{\langle#1|}
\nc{\ket}[1]{|#1\rangle}
\nc{\ketbra}[2]{|#1\rangle\!\langle#2|}
\nc{\braket}[2]{\langle#1|#2\rangle}

\nc{\proj}[1]{| #1\rangle\!\langle #1 |}
\nc{\avg}[1]{\langle#1\rangle}
%\rnc{\max}{\operatorname{max}}
\nc{\rank}{\operatorname{Rank}}
\nc{\smfrac}[2]{\mbox{$\frac{#1}{#2}$}}
\nc{\tr}{\operatorname{Tr}}
\nc{\ox}{\otimes}
\nc{\dg}{\dagger}
\nc{\dn}{\downarrow}
\nc{\cA}{{\cal A}}
\nc{\cB}{{\cal B}}
\nc{\cC}{{\cal C}}
\nc{\cD}{{\cal D}}
\nc{\cE}{{\cal E}}
\nc{\cF}{{\cal F}}
\nc{\cG}{{\cal G}}
\nc{\cH}{{\cal H}}
\nc{\cI}{{\cal I}}
\nc{\cJ}{{\cal J}}
\nc{\cK}{{\cal K}}
\nc{\cL}{{\cal L}}
\nc{\cM}{{\cal M}}
\nc{\cN}{{\cal N}}
\nc{\cO}{{\cal O}}
\nc{\cP}{{\cal P}}
\nc{\cQ}{{\cal Q}}
\nc{\cR}{{\cal R}}
\nc{\cS}{{\cal S}}
\nc{\cT}{{\cal T}}
\nc{\cV}{{\cal V}}
\nc{\cX}{{\cal X}}
\nc{\cY}{{\cal Y}}
\nc{\cZ}{{\cal Z}}
\nc{\cW}{{\cal W}}
\nc{\csupp}{{\operatorname{csupp}}}
\nc{\qsupp}{{\operatorname{qsupp}}}
\nc{\var}{{\operatorname{var}}}
\nc{\rar}{\rightarrow}
\nc{\lrar}{\longrightarrow}
\nc{\polylog}{{\operatorname{polylog}}}
%\nc{\1}{{\mathds{1}}}
\nc{\wt}{{\operatorname{wt}}}
\nc{\av}[1]{{\left\langle {#1} \right\rangle}}
\nc{\supp}{{\operatorname{supp}}}

\nc{\argmin}{{\operatorname{argmin}}}

\def\a{\alpha}
\def\b{\beta}

\def\x{\xi}

\nc{\RR}{{{\mathbb R}}}
\nc{\CC}{{{\mathbb C}}}
\nc{\FF}{{{\mathbb F}}}
\nc{\NN}{{{\mathbb N}}}
\nc{\ZZ}{{{\mathbb Z}}}
\nc{\PP}{{{\mathbb P}}}
\nc{\QQ}{{{\mathbb Q}}}
\nc{\UU}{{{\mathbb U}}}
\nc{\EE}{{{\mathbb E}}}
\nc{\id}{{\operatorname{id}}}

\nc{\CHSH}{{\operatorname{CHSH}}}

% wcl
\nc{\be}{\begin{equation}}
\nc{\ee}{{\end{equation}}}
\nc{\bea}{\begin{eqnarray}}
\nc{\eea}{\end{eqnarray}}
\nc{\<}{\langle}
\rnc{\>}{\rangle}
% \nc{\Hom}[2]{\mbox{Hom}(\CC^{#1},\CC^{#2})}
\nc{\rU}{\mbox{U}}

% switching between |i> or |e_i> for the standard basis
% \nc{\ob}[1]{e_{#1}}
\nc{\ob}[1]{#1}

\nc{\SEP}{{\text{\rm SEP}}}
\nc{\NS}{{\text{\rm NS}}}
\nc{\LOCC}{{\text{\rm LOCC}}}
\nc{\PPT}{{\text{\rm PPT}}}
\nc{\EXT}{{\text{\rm EXT}}}
% \nc{\NS}{{\text{NS}}}
\nc{\Sym}{{\operatorname{Sym}}}

%\nc{\ln}{{\text{ln}}}

\nc{\ERLO}{{E_{\text{r,LO}}}}
\nc{\ERLOCC}{{E_{\text{r,LOCC}}}}
\nc{\ERPPT}{{E_{\text{r,PPT}}}}
\nc{\ERLOCCinfty}{{E^{\infty}_{\text{r,LOCC}}}}
\nc{\Aram}{{\operatorname{\sf A}}}

\usepackage{tikz}
%\RequirePackage{doi}
 
%% tikz corrodinate
\makeatletter
\def\grd@save@target#1{%
  \def\grd@target{#1}}
\def\grd@save@start#1{%
  \def\grd@start{#1}}
\tikzset{
  grid with coordinates/.style={
    to path={%
      \pgfextra{%
        \edef\grd@@target{(\tikztotarget)}%
        \tikz@scan@one@point\grd@save@target\grd@@target\relax
        \edef\grd@@start{(\tikztostart)}%
        \tikz@scan@one@point\grd@save@start\grd@@start\relax
        \draw[minor help lines,magenta] (\tikztostart) grid (\tikztotarget);
        \draw[major help lines] (\tikztostart) grid (\tikztotarget);
        \grd@start
        \pgfmathsetmacro{\grd@xa}{\the\pgf@x/1cm}
        \pgfmathsetmacro{\grd@ya}{\the\pgf@y/1cm}
        \grd@target
        \pgfmathsetmacro{\grd@xb}{\the\pgf@x/1cm}
        \pgfmathsetmacro{\grd@yb}{\the\pgf@y/1cm}
        \pgfmathsetmacro{\grd@xc}{\grd@xa + \pgfkeysvalueof{/tikz/grid with coordinates/major step}}
        \pgfmathsetmacro{\grd@yc}{\grd@ya + \pgfkeysvalueof{/tikz/grid with coordinates/major step}}
        \foreach \x in {\grd@xa,\grd@xc,...,\grd@xb}
        \node[anchor=north] at (\x,\grd@ya) {\pgfmathprintnumber{\x}};
        \foreach \y in {\grd@ya,\grd@yc,...,\grd@yb}
        \node[anchor=east] at (\grd@xa,\y) {\pgfmathprintnumber{\y}};
      }
    }
  },
  minor help lines/.style={
    help lines,
    step=\pgfkeysvalueof{/tikz/grid with coordinates/minor step}
  },
  major help lines/.style={
    help lines,
    line width=\pgfkeysvalueof{/tikz/grid with coordinates/major line width},
    step=\pgfkeysvalueof{/tikz/grid with coordinates/major step}
  },
  grid with coordinates/.cd,
  minor step/.initial=.2,
  major step/.initial=1,
  major line width/.initial=2pt,
}
\makeatother

% restatable problems environment
\usepackage{thmtools}
\usepackage{thm-restate}
\usepackage{etoolbox}
% \makeatletter
% \def\recommend@tions{}
% \newcounter{recommend@cnt}
% \declaretheorem[name=Recommendation,numberwithin=section]{thmrecommend}
% \newenvironment{recommendation}{%
%   \stepcounter{recommend@cnt}%
%   % fully expand the counter representation
%   % before calling \restatable
%   \begingroup\edef\x{\endgroup
%     \noexpand\restatable{thmrecommend}{recommend\Alph{recommend@cnt}}%
%   }\x
% }{\endrestatable
%   % add to \recommend@tions the command
%   % \recommend<letter>*
%   \xappto\recommend@tions{%
%     \expandafter\noexpand\csname recommend\Alph{recommend@cnt}\endcsname*%
%   }%
% }
% \newcommand{\allrecommendations}{\recommend@tions}
% \makeatother
\makeatletter
\def\problem@s{}
\newcounter{problems@cnt}

\newcommand{\allproblems}{\problem@s}
\makeatother

\definecolor{colorone}{rgb}{1,0.36,0.03}
\definecolor{colortwo}{rgb}{0.4,0.77,0.17}
\definecolor{colorthree}{rgb}{0.01,0.51,0.93}
\definecolor{colorfour}{rgb}{0.47,0.26,0.58}
\usepackage{tcolorbox}
\usepackage{relsize}
\usepackage{graphicx}
\nc{\st}{\text{subject to} \ }
\nc{\supre}{\text{supremum} \ }
\nc{\sdp}{\text{sdp}}
\nc{\cU}{\mathcal U}
\usepackage[qm]{qcircuit}
\usepackage{array}

%%%%%%%%%%%%%%%%%%%%%%%%%%%%circuit
%\usepackage{scribe}
%\usepackage{listings}
%%%%%%%%%%%%%%%%%%%%%%%%%%%%%%
%%%%comment%%%%%%%%%%%%%%%%%%%%%
\newcommand{\xw}[1]{\textcolor{black}{#1}}

%%%%%%%%%%%%%%%%%%%%%%%%%%%%%%%
%%%%%%%%%%%%%%%%%%%%%%%%%%%%%%%
\allowdisplaybreaks

\begin{document}
\title{Variational Quantum Singular Value Decomposition}
\author{Xin Wang}
\email{wangxin73@baidu.com}
\affiliation{Institute for Quantum Computing, Baidu Research, Beijing 100193, China}
\author{Zhixin Song}
\email{zhixinsong0524@gmail.com}
\affiliation{Institute for Quantum Computing, Baidu Research, Beijing 100193, China}
\author{Youle Wang}
\email{youle.wang@student.uts.edu.au}
\affiliation{Institute for Quantum Computing, Baidu Research, Beijing 100193, China}
\affiliation{Centre for Quantum Software and Information,
University of Technology Sydney, NSW 2007, Australia}

\begin{abstract} 
Singular value decomposition is central to many problems in engineering and scientific fields. Several quantum algorithms have been proposed to determine the singular values and their associated singular vectors of a given matrix. Although these algorithms are promising, the required quantum subroutines and resources are too costly on near-term quantum devices. In this work, we propose a variational quantum algorithm for singular value decomposition (VQSVD). By exploiting the variational principles for singular values and the Ky Fan Theorem, we design a novel loss function such that two quantum neural networks (or parameterized quantum circuits) could be trained to learn the singular vectors and output the corresponding singular values. Furthermore, we conduct numerical simulations of VQSVD for random matrices as well as its applications in image compression of handwritten digits. Finally, we discuss the applications of our algorithm in recommendation systems and polar decomposition. Our work explores new avenues for quantum information processing beyond the conventional protocols that only works for Hermitian data, and reveals the capability of matrix decomposition on near-term quantum devices. 
\end{abstract}  

% \date{\today}
\maketitle 
% \tableofcontents

\section{Introduction}

Matrix decompositions are integral parts of many algorithms in
optimization~\cite{Boyd2004}, machine learning~\cite{Murphy2012}, and recommendation systems~\cite{Koren2009}. One crucial approach is the singular value decomposition (SVD). Mathematical applications of the SVD include computing the pseudoinverse, matrix approximation, and estimating the range and null space of a matrix. SVD has also %been successfully applied to many areas of science, engineering, and statistics, such as signal processing, image processing, and recommendation systems. 
been successfully applied to many areas of science and engineering industry, such as data compression, noise reduction, and image processing. 
The goal of 
%singular value decomposition 
SVD is to decompose a square matrix $M$ to $UDV^{\dagger}$ with diagonal matrix $D = \text{diag}(d_1, \cdots, d_r)$ and unitaries $U$ and $V$, where $r$ denotes the rank of matrix $M$.

Quantum computing is believed to deliver new technology to speed up computation, and it already promises speedups for integer factoring~\cite{Shor1997} and database search~\cite{Grover1996} in theory. Enormous efforts have been made in exploring the possibility of using quantum resources to speed up other important tasks, including linear system solvers~\cite{Harrow2009,Clader2013,Childs2017b,Wossnig2018a,Subasi2018a}, convex optimizations~\cite{Brandao2017,Chakrabarti2020,Brandao2019,VanApeldoorn2020}, and machine learning~\cite{Biamonte2017b,Schuld2018b,Ciliberto2018}. 
Quantum algorithms for SVD have been proposed in \cite{Kerenidis2016,Rebentrost2018}, which leads to applications in solving linear systems of equations~\cite{Wossnig2018a} and developing quantum recommendation systems~\cite{Kerenidis2016}. 
However, these algorithms above are too costly to be convincingly validated for near-term quantum devices, which only support a restricted number of physical qubits and limited gate fidelity.  

Hence, an important direction is to find useful algorithms that could work on noisy intermediate-scale quantum (NISQ) devices~\cite{Preskill2018a}. The leading strategy to solve various problems using NISQ devices are called variational quantum algorithms \cite{McClean2016,Cerezo2020,Endo2020,Bharti2021}. 
These algorithms can be implemented on shallow-depth quantum circuits that depend on external parameters (e.g., angle $\theta$ in rotation gates $R_y(\theta)$), which are also known as parameterized quantum circuits or quantum neural networks (QNNs). These parameters will be optimized externally by a classical computer with respect to certain loss functions. Various variational algorithms using QNNs have been proposed for Hamiltonian ground and excited states preparation~\cite{Peruzzo2014,McClean2017a,Higgott2018,Barkoutsos2019,Nakanishi2019,Ostaszewski2019,nakanishi2019subspace}, quantum state metric estimation~\cite{Cerezo2019,Chen2020}, Gibbs state preparation~\cite{Wu2019b,Chowdhury2020,Wang2020a}, quantum compiling~\cite{Khatri2019,Heya2018,Jones2018,Sharma_2020}, machine learning \cite{Schuld2018b,Ciliberto2018,Cao2021,Cong2018,Li2021} etc. 
Furthermore, unlike the strong need of error correction in fault-tolerant quantum computation, noise in shallow quantum circuits can be suppressed via error mitigation \cite{Temme2017,McArdle2018,Strikis2020,Kandala2018,Jiang2020,Wang2021}, indicating the feasibility of quantum computing with NISQ devices.

In this paper, we formulate the task of SVD as an optimization problem and derive a variational quantum algorithm for singular value decomposition (VQSVD) that can be implemented on near-term quantum computers. The core idea is to construct a novel loss function inspired by the variational principles and properties of singular values. We theoretically show that the optimized quantum neural networks based on this loss function could learn the singular vectors of a given matrix. That is, we could train two quantum neural networks $U(\a)$ and $V(\b)$ to learn the singular vectors of a matrix $M$ in the sense that $M \approx U(\a) D V(\b)^{\dagger}$, where the diagonal matrix $D$ provides us the singular values.
Our approach generalizes the conventional methods of Hamiltonian diagonalization \cite{Larose2019,Nakanishi2019} to a non-Hermitian regime, extending the capabilities of matrix decomposition on near-term quantum computers. As a proof of principle, we conduct numerical simulations to estimate the SVD of random $8\times 8$ matrices. Furthermore, we explore the possibility of applying VQSVD to compress images of size $32 \times 32$ pixel, including the famous MNIST dataset. Finally, we showcase the applications of VQSVD in recommendation systems and polar decomposition.

% \YL{

% % With many potential applications, we expect our VQSVD algorithm could influence fields of quantum chemistry, condensed matter physics, and machine learning. Particularly, VQSVD leads to a framework of quantum algorithms for variationally determining the energies, and preparing ground and excited states of Hamiltonians.
% % in physics, a significantly important application is to extract the spectra, eigenvalues and eigenstates of large Hamiltonians.
% % The reason is that the SVD reduces to the spectral decomposition, when the matrix is a Hamiltonian, i.e., Hermitian matrix.
% % Particularly, 
% % In machine learning, it can applied to recommendation systems and polar decomposition.
% }

%%%%%%%%%%%%%%%%%%%%%%%%%%%%%%%%%%%%%%%%%%%%%%%%%%%%%

%%%%%%%%%%%%%%%%%%%%%%%%%%%%%%%%%%%%%%%%%%%%%%%%%%%%%
\section{Main results}
\subsection{Variational quantum  singular value decomposition} \label{sec:VQSVD}
In this section, we present a variational quantum algorithm for singular value decomposition of  $n\times n$  matrices, and it can be naturally generalized for $n \times m$ complex matrices.
For given $n\times n$ matrix $M \in \RR^{n\times n}$, there exists a decomposition of the form
$M = U D V^{\dagger}$,
where $U,V \in \RR^{n\times n}$ are unitary operators and $D$ is a diagonal matrix with $r$ positive entries $d_{1},\cdots,d_r$ and $r$ is the rank of $M$. Alternatively, we write 
$M = \sum_{j=1}^r d_j \ketbra{u_j}{v_j}$, where $\ket{u_j}$, $\ket{v_j}$, and $d_j$ are the sets of left and right 
orthonormal singular vectors, and singular values of $M$, respectively.

A vital issue in NISQ algorithm is to choose a suitable loss function. A desirable loss function here should be able to output the target singular values and vectors after the optimization and in particular should be implementable on near-term devices. As a key step, we design such a desirable loss function for quantum singular value decomposition (cf.~Section~\ref{sec:cost}). 

The input of our VQSVD algorithm is a decomposition of the matrix $M$ into a linear combination of $K$ unitaries of the form
$M = \sum_{k=1}^K c_k A_k$
with real numbers $c_k$. For instance, one could decompose $M$ into a linear combination of Pauli terms. 
\xw{And a method for finding such a decomposition was proposed in a recent work~\cite{Gunlycke2020}. Several important generic matrices in engineering and science also exhibit LCUs, including Toeplitz, circulant, and Hankle matrices \cite{wan2019implementing}.} After taking in the inputs, our VQSVD algorithm enters a hybrid quantum-classical optimization loop to train the parameters $\bm\a$ and $\bm\b$ in the parameterized quantum circuits $U(\bm\a)$ and $V(\bm\b)$ via a designed loss $L(\bm\a,\bm\b)$ (cf.~Section~\ref{sec:cost}). This loss function can be computed on quantum computers via the Hadamard tests. We then feeds the value of the loss function or its gradients (in gradient-based optimization) to a classical computer, which adjusts the parameters $\bm\a$ and $\bm\b$ for the next round of the loop. The goal is to find the global minimum of $L(\bm\a,\bm\b)$, i.e.,
$\bm\a^*,\bm\b^* \coloneqq \arg\min_{\bm\a,\bm\b} L(\bm\a,\bm\b)$.
 
In practice, one will need to set some termination condition (e.g., convergence tolerance) on the optimization loop. After the hybrid optimization, one will obtain values $\{m_{j}\}_{j=1}^{T}$ and optimal parameters $\bm\a^{*}$ and $\bm\b^{*}$. The outputs $\{m_{j}\}_{j=1}^{T}$ approximate the singular values of $M$, and approximate singular vectors of $M$ are obtained by inputting optimal parameters $\bm\a^{*}$ and $\bm\b^{*}$ into the parameterized circuits $U$ and $V$ in VQSVD and then applying to the orthonormal vectors $\ket{\psi_j}$ for all $j=1,...,T$. The detailed VQSVD algorithm is included in Algorithm \ref{alg:rftl}. A schematic diagram is shown in Fig.~\ref{VQSVD_Diagram}. 

\begin{algorithm}[H] 
\caption{Variational quantum  singular value decomposition (VQSVD)}
\begin{algorithmic}[1] \label{alg:rftl}
\STATE Input:  $\{c_k, A_k\}_{k=1}^K$, desired rank $T$, parametrized circuits $U(\bm\a)$ and $V(\bm\b)$ with initial parameters of $\bm\a$,  $\bm\b$, and tolerance $\varepsilon$;

\STATE Prepare positive numbers $q_1 > \cdots q_{T} >0$;

\STATE Choose computational basis $\ket{\psi_1},\cdots,\ket{\psi_T}$;

\FOR {$j=1,\cdots,T$}
\STATE  \qquad Apply $U(\bm\a)$ to state $		\ket{\psi_j}$ and obtain $\ket {u_j} = U(\bm\a)\ket{\psi_j}$; 
 
\STATE \qquad Apply $V(\bm\b)$ to state $\ket{\psi_j}$ and obtain $\ket {v_j} = V(\bm\b)\ket{\psi_j}$ ;

\STATE \qquad Compute $m_j=\text{Re}\bra{u_j} M \ket{v_j}$ via Hadamard tests;
\ENDFOR 

\STATE Compute the loss function $L(\bm\a,\bm\b) =  \sum_{j=1}^T q_j m_j$;

\STATE Perform optimization to maximize $L(\bm\a,\bm\b)$, update parameters of $\bm\a$ and $\bm\b$;

%\STATE Set $\delta = \cF_2(\bm\theta) - $
\STATE Repeat 4-10 until the loss function $L(\bm\a,\bm\b)$ converges with tolerance $\varepsilon$;
%\ENDWHILE
\STATE Output $\{m_j\}_{j=1}^T$  as the largest $T$ singular values, output $U(\bm\a^*)$ and $V(\bm\b^*)$ as corresponding unitary operators ($\bra{\psi_j}U(\bm\a^*)^\dagger$ and $V(\bm\b^*)\ket{\psi_j}$ are left and right singular vectors, respectively).
\end{algorithmic}
\end{algorithm}

\begin{figure*}[t]
\centering
\includegraphics[scale = 0.47]{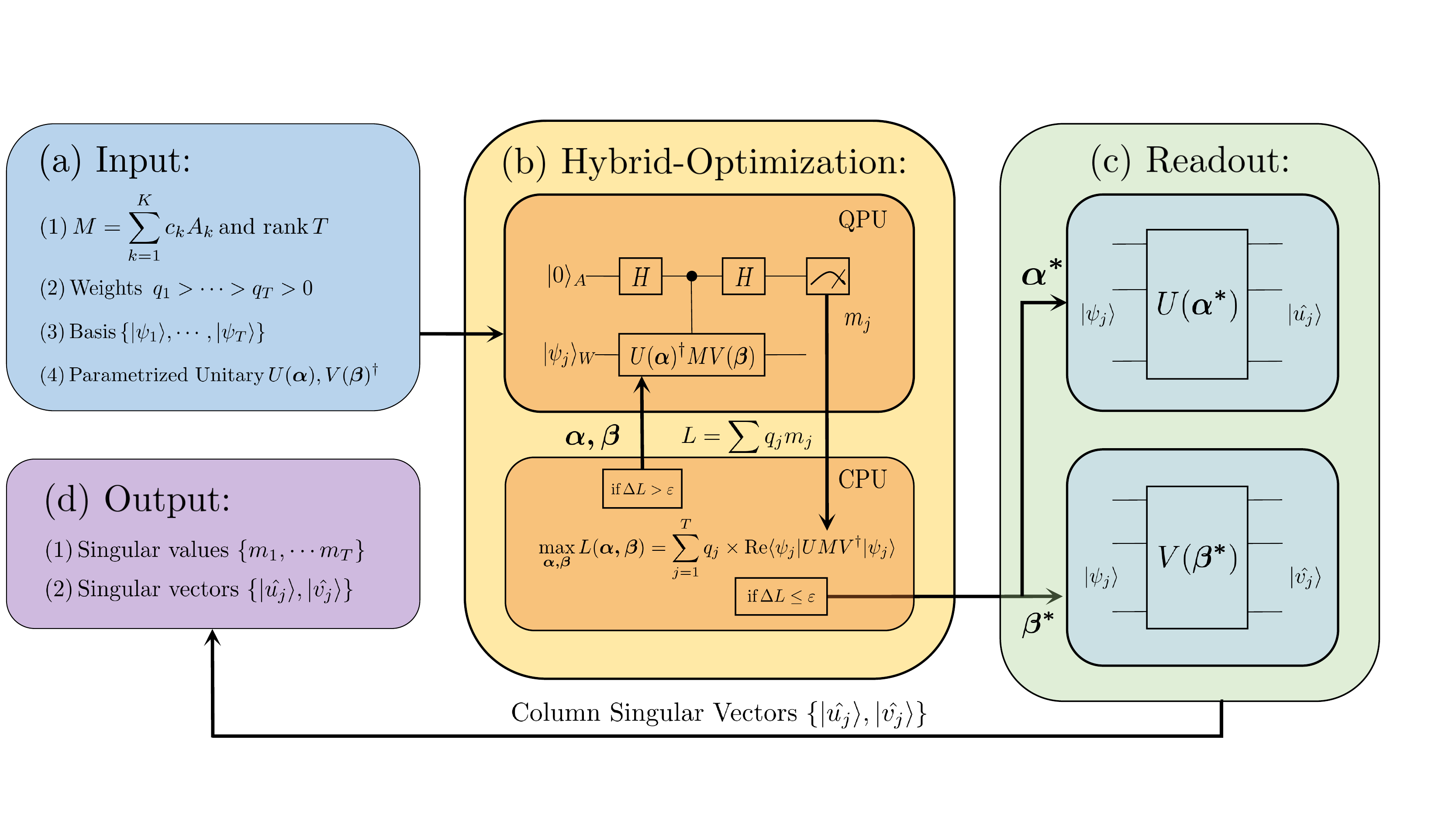}
\caption{Schematic diagram including the steps of VQSVD algorithm. (a) The unitary decomposition of matrix $M$ is provided as the first input. This can be achieved through Pauli decomposition with tensor products of $\{X, Y, Z, I\}$. The algorithm also requires the desired number of singular values $T$, orthonormal input states $\langle\psi_i |\psi_j\rangle = \delta_{ij}$ and two parametrized unitary matrices $U(\bm\a), V(\bm \b)$. Finally, the weight is usually set to be integers $q_j = T+1-j$. The former two information will be sent to the hybrid-optimization loop in (b) where the quantum computer (QPU) will estimate each singular value $m_j = \text{Re} \bra{\psi_j}U^\dagger MV \ket{\psi_j}$ via Hadamard test. These estimations are sent to a classical computer (CPU) to evaluate the loss function until it converges to tolerance $\varepsilon$. Once we reach the global minimum, the singular vectors $\{|\hat{u_j}\rangle, |\hat{v_j}\rangle\}$ can be produced in (c) by applying the learned unitary matrices $U(\bm \a^*)$ and $V(\bm\b^*)$ on orthonormal basis $\{\ket{\psi_j}\}_{j=1}^{T}$ to extract the column vectors.}
\label{VQSVD_Diagram} 
\end{figure*}

\subsection{Loss function}\label{sec:cost}
In this section, we provide more details and intuitions of the loss function in VQSVD. The key idea is to exploit the variational principles in matrix computation, which have great importance in analysis for error bounds of matrix analysis. In particular, the singular values satisfy a subtler variational property that incorporates both left and right singular vectors at the same time. For a given $n\times n$ matrix $M$, the largest singular value of $M$ can be characterized by
\begin{align}
d_1 = \max_{\ket u,\ket v} \frac{|\bra u M \ket v|}{\|u\|\|v\|} = \max_{\ket u,\ket v\in\cS} {\text{Re}[\bra u M \ket v}],
\end{align}
where $\cS$ is the set of pure states (normalized vectors) and $\text{Re}$ means taking the real part. Moreover, by denoting the optimal singular vectors as $\ket{u_1},\ket{v_1}$, the remaining singular values ($d_2\ge \cdots \ge d_r$) can be deduced using similar methods by restricting the unit vectors to be orthogonal to previous singular value vectors.

For a given $n\times n$ matrix $M$, the largest singular value of $M$ can be characterized by
\begin{align}\label{eq:SV1}
d_1 = \max_{\ket u,\ket v\in\cS} {\text{Re}[\bra u M \ket v}],
\end{align}
where $\cS$ is the set of pure states (normalized vectors) and $\text{Re}[\cdot]$ means to take the real part. Moreover, by denoting the optimal singular vectors as $\ket{u_1},\ket{v_1}$, the remaining singular values ($d_2\ge \cdots \ge d_r$) can be deduced as follows
\begin{equation}\label{eq:SV2}
\begin{split}
d_k = \max & \ \text{Re}[\bra u M \ket v]\\
\text{s.t.} &\ \ket u,\ket v\in\cS, \\
&\ \ket{u} \bot\rm{span}\{\ket{u_1},\cdots,\ket{u_{k-1}}\},\\
&\ \ket{v} \bot\rm{span}\{\ket{v_1},\cdots,\ket{v_{k-1}}\}.\\
\end{split}
\end{equation}

Another useful fact~(Ky Fan Theorem, cf.~\cite{fan1951maximum,Zhang2015a}) is that 
\begin{align}\label{eq:property VSVD}
\sum_{j=1}^{T} d_j = \max_{\text{orthonomal} \ \{u_j\},\{v_j\} } \sum_{j=1}^{T} \bra {u_j} M \ket {v_j}.
\end{align} 
 
For a given matrix $M$,
the loss function in our VQSVD algorithm is defined as
\begin{align}\label{eq:lss}
L(\bm\a,\bm\b) = \sum_{j=1}^T q_j\times \text{Re} \bra{\psi_j}U(\bm\a)^{\dagger} M V(\bm\b)\ket{\psi_j},
\end{align}
 where $q_1>\cdots>q_T>0$ are real weights and $\{{\psi_j}\}_{j=1}^T$ is a set of orthonormal states. The setting of constants $q_j$ not only theoretically guarantees the correct outcome but also allows our approach to be more flexible. Usually, choosing these constants with excellent performances are empirical. 

% \YL{In particular, constants $q_j$ are used to discriminate these prepared states, $U(\bm\alpha)\ket{\psi_j}$ and $V(\bm\beta)\ket{\psi_j}$, corresponding to different $j$.}

\begin{theorem}
For a given matrix $M$, the loss function $L(\bm\a,\bm\b)$  is maximized if only if \begin{align}
\bra{\psi_j}U(\bm\a)^{\dagger} M V(\bm\b)\ket{\psi_j} = d_j, \ \forall \ 1\le j\le T,
\end{align}
where $d_1\ge \cdots \ge d_T$  are the largest $T$ singular values  of $M$ and $\ket{\psi_1},\cdots,\ket{\psi_T}$ are orthonormal vectors. Moreover, $\bra{\psi_j}U(\bm\a)^\dagger$ and $V(\bm\b)\ket{\psi_j}$ are left and right singular vectors, respectively.
\end{theorem}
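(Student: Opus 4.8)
The plan is to eliminate the parametrization and reduce the statement to a clean extremal problem over pairs of orthonormal tuples, then combine a summation-by-parts manipulation with the Ky Fan Theorem~\eqref{eq:property VSVD}. First I would set $\ket{u_j}\coloneqq U(\bm\a)\ket{\psi_j}$ and $\ket{v_j}\coloneqq V(\bm\b)\ket{\psi_j}$; since $U(\bm\a)$ and $V(\bm\b)$ are unitary and $\{\ket{\psi_j}\}_{j=1}^T$ are orthonormal, the families $\{\ket{u_j}\}_{j=1}^T$ and $\{\ket{v_j}\}_{j=1}^T$ are each orthonormal, and conversely (for a sufficiently expressive ansatz) every such pair of families is attained. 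Hence maximizing $L(\bm\a,\bm\b)$ over the parameters is the same as maximizing $\sum_{j=1}^T q_j\,\text{Re}\bra{u_j}M\ket{v_j}$ over all orthonormal $T$-tuples $\{\ket{u_j}\},\{\ket{v_j}\}$, and $\bra{\psi_j}U(\bm\a)^{\dagger}MV(\bm\b)\ket{\psi_j}=\bra{u_j}M\ket{v_j}$.

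Next I would perform Abel summation. With the convention $q_{T+1}\coloneqq 0$ and $w_l\coloneqq q_l-q_{l+1}$, the strict ordering $q_1>\cdots>q_T>0$ gives $w_l>0$ for all $1\le l\le T$ and $q_j=\sum_{l=j}^{T}w_l$, so
\begin{align}
\sum_{j=1}^T q_j\,\text{Re}\bra{u_j}M\ket{v_j}&=\sum_{l=1}^{T}w_l\,S_l,\qquad S_l\coloneqq\sum_{j=1}^{l}\text{Re}\bra{u_j}M\ket{v_j}.
\end{align}
Applying the Ky Fan Theorem to the orthonormal sub-tuples $\{\ket{u_j}\}_{j=1}^{l},\{\ket{v_j}\}_{j=1}^{l}$ (and using $\text{Re}\,z\le|z|$ together with a common phase rotation of the $\ket{u_j}$ to reduce to the real case) yields $S_l\le\sum_{j=1}^{l}d_j$ for every $l$, hence $L(\bm\a,\bm\b)\le\sum_{l=1}^{T}w_l\sum_{j=1}^{l}d_j=\sum_{j=1}^{T}q_j d_j$. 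Since the true singular vectors attain this bound, $\sum_{j=1}^T q_j d_j$ is the maximum, which already proves the ``if'' direction.

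For the ``only if'' direction, the point is that every weight $w_l$ is strictly positive, so $L(\bm\a,\bm\b)=\sum_{j=1}^T q_j d_j$ forces $S_l=\sum_{j=1}^{l}d_j$ for \emph{every} $l$; taking successive differences gives $\text{Re}\bra{u_j}M\ket{v_j}=d_j$ for each $j$. To upgrade this to the exact equality $\bra{u_j}M\ket{v_j}=d_j$, I would use that $\bigl|\sum_{j=1}^{l}\bra{u_j}M\ket{v_j}\bigr|\le\sum_{j=1}^{l}d_j=S_l=\text{Re}\sum_{j=1}^{l}\bra{u_j}M\ket{v_j}$, which can only hold if $\sum_{j=1}^{l}\text{Im}\bra{u_j}M\ket{v_j}=0$ for all $l$; differencing again gives $\text{Im}\bra{u_j}M\ket{v_j}=0$, hence $\bra{u_j}M\ket{v_j}=d_j$. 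Finally, to identify the singular vectors I would induct on $j$ using the variational characterizations~\eqref{eq:SV1}--\eqref{eq:SV2}: for $j=1$, saturating $d_1=\text{Re}\bra{u_1}M\ket{v_1}\le\|M\ket{v_1}\|\le d_1$ (Cauchy--Schwarz and $\|M\|=d_1$) forces $M\ket{v_1}=d_1\ket{u_1}$ and, via $M^{\dagger}M\ket{v_1}=d_1^2\ket{v_1}$, also $M^{\dagger}\ket{u_1}=d_1\ket{v_1}$; for the inductive step, once $\ket{u_i},\ket{v_i}$ are known singular pairs for $i<j$, the orthogonality constraints confine $\ket{u_j},\ket{v_j}$ to the orthogonal complements, on which the operator norm drops to $d_j$ by Courant--Fischer (because $\{\ket{v_i}\}_{i<j}$ are eigenvectors of $M^{\dagger}M$ for its top $j-1$ eigenvalues), so attaining $d_j$ again pins down $\ket{u_j},\ket{v_j}$ as a singular pair.

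I expect the genuine obstacle to be this last step rather than the inequality itself: when singular values are degenerate, ``the'' singular vectors are only defined up to a unitary rotation within each singular subspace, so the induction must be phrased in terms of those subspaces (and the fact that an orthonormal set of singular pairs lies inside, and can be completed to, a full SVD), and one must check that the vectors the algorithm outputs populate the correct subspaces with the correct multiplicities. The reformulation and the Abel/Ky-Fan estimate, by contrast, are routine once the unitary substitution in the first step is made.
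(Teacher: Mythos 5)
Your proposal is correct and follows essentially the same route as the paper's proof: Abel summation with the positive weights $q_j-q_{j+1}$ (with $q_{T+1}=0$) followed by the Ky Fan bound on the partial sums, with strict positivity of the weights forcing equality term by term and hence $m_j=d_j$. You are in fact more careful than the paper on the points it glosses over --- the paper simply assumes the quantities $\bra{\psi_j}U(\bm\a)^{\dagger}MV(\bm\b)\ket{\psi_j}$ are real ``after the ideal maximization'' and asserts the singular-vector identification directly from the variational characterization, whereas you derive the vanishing of the imaginary parts from saturation of the Ky Fan bound and spell out the inductive identification of the singular pairs, including the degenerate-singular-value caveat.
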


\begin{proof}
Let assume that $\bra{\psi_j}U(\bm\a)^{\dagger} M V(\bm\b)\ket{\psi_j} = m_j$ are real numbers for simplicity, which could be achieved after the ideal maximization process.

Then we have
\begin{align}
L(\bm\a,\bm\b) = &\sum_{j=1}^T q_j\times m_j\\
 \le & \sum_{j=1}^T q_j\times m_j^\downarrow \label{eq:paixu}\\
= & \sum_{j=1}^{T}(q_j-q_{j+1})\sum_{t=1}^{j}m_t^\downarrow \\
\le &\sum_{j=1}^{T}(q_j-q_{j+1})\sum_{t=1}^{j} d_t.\label{eq:loss upper bound}
\end{align}

Assume $q_{T+1}=0$. The first inequality Eq.~\eqref{eq:paixu} follows due to the rearrangement inequality.
The second inequality Eq.~\eqref{eq:loss upper bound} follows due to property of singular values in Eq.~\eqref{eq:property VSVD}. 
Note that the upper bound in Eq.~\eqref{eq:loss upper bound} could be achieved if and only if $\sum_{t=1}^jd_t = \sum_{t=1}^j m_t$ for all $j$, which is equivalent to
\begin{align}
m_j = d_j, \quad\forall \ 1\le j \le T.
\end{align} 

Therefore, the loss function is maximized if and only if $\bra{\psi_j}U(\bm\a)^{\dagger} M V(\bm\b)\ket{\psi_j}$ extracts the singular values $d_j$ for each $j$ from $1$ to $T$. Further, due to the variational property of singular values in Eq.~\eqref{eq:SV1} and Eq.~\eqref{eq:SV2}, we conclude that the quantum neural networks $U$ and $V$ learn the singular vectors in the sense that $\bra{\psi_j}U(\bm\a)^\dagger$ and $V(\bm\b)\ket{\psi_j}$ extract the left and right singular vectors, respectively.
\end{proof}

\xw{
In particular, the weights $q_{j}$ can be any sequence that satisfies the condition in Theorem 1. They are the key to train the quantum neural networks to learn the right singular vectors. Such weights can be understood as the tool to discriminate and identify the eigen-space. }

\begin{proposition}
\label{prop:loss:function:estimation}
The loss function $L(\bm\a,\bm\b)$ can be estimated on near-term quantum devices.
\end{proposition}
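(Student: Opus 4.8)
The plan is to show that each term appearing in the loss function, namely $\text{Re}\bra{\psi_j}U(\bm\a)^{\dagger} M V(\bm\b)\ket{\psi_j}$, can be evaluated on a quantum device by expanding $M$ through the input linear-combination-of-unitaries decomposition $M = \sum_{k=1}^K c_k A_k$ and estimating each resulting overlap with a Hadamard test. First I would substitute the decomposition into a single term to obtain
\begin{align}\label{eq:loss:expand}
\text{Re}\bra{\psi_j}U(\bm\a)^{\dagger} M V(\bm\b)\ket{\psi_j} = \sum_{k=1}^K c_k\, \text{Re}\bra{\psi_j}U(\bm\a)^{\dagger} A_k V(\bm\b)\ket{\psi_j},
\end{align}
so that the whole loss becomes $L(\bm\a,\bm\b) = \sum_{j=1}^T \sum_{k=1}^K q_j c_k\, \text{Re}\bra{\psi_j}U(\bm\a)^{\dagger} A_k V(\bm\b)\ket{\psi_j}$, a finite real linear combination of quantities of the form $\text{Re}\bra{\psi_j} W_{jk} \ket{\psi_j}$ with $W_{jk} = U(\bm\a)^{\dagger} A_k V(\bm\b)$ a product of unitaries (since each $A_k$ is unitary and $U,V$ are parameterized quantum circuits).

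Next I would recall the standard Hadamard test: given an ancilla qubit prepared in $\ket{+}$, a controlled application of a unitary $W$ on a register prepared in $\ket{\psi_j}$, and a measurement of the ancilla in the $X$ basis, the expectation value of the ancilla observable equals $\text{Re}\bra{\psi_j} W \ket{\psi_j}$; measuring in the $Y$ basis yields the imaginary part. Here one takes $W = U(\bm\a)^{\dagger} A_k V(\bm\b)$: the circuit prepares $\ket{\psi_j}$ (a computational basis state, by construction), then applies controlled-$V(\bm\b)$, controlled-$A_k$, and controlled-$U(\bm\a)^\dagger$ in sequence. I would note that this only requires implementing controlled versions of the parameterized circuits and of the (typically Pauli-string or otherwise efficiently implementable) unitaries $A_k$, all of which are shallow and NISQ-friendly. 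Repeating the test $O(1/\eps^2)$ times gives an estimate of each $\text{Re}\bra{\psi_j}U(\bm\a)^{\dagger} A_k V(\bm\b)\ket{\psi_j}$ to additive accuracy $\eps$, and summing with the classical coefficients $q_j c_k$ gives an estimate of $L(\bm\a,\bm\b)$ whose error scales with $\sum_{j,k} q_j |c_k|$; since $T$ and $K$ are fixed and the $q_j, c_k$ are given real constants, this is a constant-factor overhead, so $L$ is estimable to any desired precision with polynomially many runs.

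The only genuinely delicate point — and the one I would spend the most care on — is making sure the controlled-$W_{jk}$ is actually the object the Hadamard test needs: one must control the whole product $U(\bm\a)^\dagger A_k V(\bm\b)$ coherently, or equivalently note that controlling each factor and composing works because the uncontrolled parts act trivially. A clean way to phrase this, which I would adopt, is that since $\ket{\psi_j}$ is fixed it suffices to prepare it unconditionally and then apply controlled-$V(\bm\b)$, controlled-$A_k$, controlled-$U(\bm\a)^\dagger$; the global phase ambiguity on $U,V$ is irrelevant because it cancels in $U^\dagger(\cdot)V$ only up to a phase that is fixed once one fixes representatives, and in any case the statement of the proposition is about estimability, not about a particular phase convention. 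Everything else — linearity of expectation, the i.i.d. sampling bound, the fact that Pauli decompositions or the LCU forms cited in the introduction make the $A_k$ efficiently implementable — is routine, so the proof reduces to assembling \eqref{eq:loss:expand} with the Hadamard-test primitive already invoked in Algorithm~\ref{alg:rftl}.
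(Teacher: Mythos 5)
Your proposal is correct and follows essentially the same route as the paper: expand $M$ via the linear-combination-of-unitaries input to reduce each term of the loss to $\text{Re}\bra{\psi_j}U(\bm\a)^{\dagger}A_k V(\bm\b)\ket{\psi_j}$, and estimate each such quantity with a Hadamard test using one ancilla and a controlled application of the product unitary. Your added care about coherently controlling the whole product $U(\bm\a)^{\dagger}A_kV(\bm\b)$ and the sampling-cost accounting are sensible elaborations of the same argument rather than a different approach.
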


For $M = \sum_{k=1}^K c_k A_k$ with unitaries $A_k$ and real numbers $c_k$, the quantity $\text{Re}\bra{\psi_j}U(\bm\a)^{\dagger} M V(\bm\b)\ket{\psi_j}$ can be decomposed to 
\begin{align}\label{eq:decom of M value}
    \sum_{k=1}^K  c_k \times \text{Re}\bra{\psi_j}U(\bm\a)^{\dagger} A_k V(\bm\b)\ket{\psi_j}.
\end{align}
To estimate the quantity in Eq.~\eqref{eq:decom of M value}, we could use quantum subroutines for estimating the quantity ${\rm Re}{\bra \psi U \ket\psi}$ for a general unitary $U$. One of these subroutines is to utilize the well-known Hadamard test~\cite{Aharonov2009a}, which requires only one ancillary qubit, one copy of state $\ket{\psi}$, and one controlled unitary operation $U$, and hence it can be experimentally implemented on near term quantum hardware. To be specific, Hadamard test (see Fig.~\ref{fig:hadamard:test}) starts with state $\ket{+}_{A}\ket{\psi}_{W}$, where $A$ denotes the ancillary qubit, and $W$ denotes the work register, and then apply a controlled unitary $U$, conditioned on the qubit in register $A$, to prepare state $\frac{1}{\sqrt{2}}(\ket{0}_{A}\ket{\psi}_{W}+\ket{1}_{A}U\ket{\psi}_{W})$, at last, apply a Hadamard gate on the ancillary qubit, and measure. If the measurement outcome is $0$, then let the output be $1$; otherwise, let the output be $-1$, and the expectation of output is ${\rm Re}\bra \psi U \ket \psi$. As for the imaginary part ${\rm Im}\bra \psi U \ket \psi$, it also can be estimated via Hadamard test by starting with state $\frac{1}{\sqrt{2}}(\ket{0}_{A}+i\ket{1}_{A})\ket{\psi}_{W}$. 

\xw{\textbf{Remark 1}: Notice that terms in Eq.~\eqref{eq:decom of M value} may be exponentially many for matrices with particular real-world applications, endowing a potential obstacle to the loss evaluation. However, we could apply the importance sampling technique to circumvent this issue. We provide a detailed discussion in Appendix~\ref{sec:supplemental:cost_evaluation}. In particular, we show that the loss evaluation efficiency is dependent on the $\ell_1$-norm of all coefficients $c_k$ rather than the integer $K$. Hence, for a matrix with a reasonable $\ell_1$-norm (e.g., polynomial size), the evaluation could be efficient even for large-size problems. Besides, we take the circulant matrix as an example in Appendix~\ref{sec:supplemental:cost_evaluation} for illustration, suggesting the potential application of our approach in related practical fields.} 

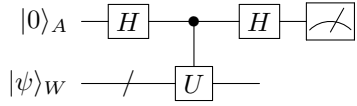
\begin{figure}[t]
\[\Qcircuit @C=1em @R=1em{
\lstick{\ket{0}_{A}}  & \gate{H} & \ctrl{1} &  \gate{H} & \meter\\ 
\lstick{\ket{\psi}_{W}} & /\qw   &\gate{U}  & \qw &  \\
}
\]
\caption{Quantum circuit for implementing Hadamard test}
\label{fig:hadamard:test}
\end{figure}

%%%% New sub-section
\subsection{As a generalization of the VQE family}\label{sec:vqe}
In this subsection, we discuss the connection between our VQSVD algorithm and the variational quantum eigensolver (VQE)~\cite{Peruzzo2014}, which estimates the ground-state energy given a molecular electronic-structure Hamiltonian $H$. This is a central problem to quantum chemistry as many properties of the molecule can be calculated once we determine the eigenstates of $H$. Several related algorithms have been proposed to improve the spectrum learning capability of VQE (i.e., SSVQE~\cite{nakanishi2019subspace}) such that the properties of excited states can also be explored. We consider these algorithms as a unified family and promising applications for quantum chemistry. 

%For practical reasons, one has to transform the electronic-structure Hamiltonian $H$ to a qubit Hamiltonian which consists of Pauli tensor products $\sigma_i \otimes \cdots \otimes \sigma_j$ to work on quantum devices. 
For practical reasons, one has to discretize the aforementioned Hamiltonian $H$ into a series of Pauli tensor products $d_i \otimes \cdots \otimes d_j$ to work on quantum devices. Many physical models naturally fit into this scheme including the quantum Ising model and the Heisenberg Model. In particular, if the input of VQSVD in Eq.~\eqref{eq:lss} is a Hamiltonian in a discretized form (i.e., a linear combination of Pauli tensor products or equivalently a Hermitian matrix), VQSVD 
could be naturally applied to diagonalize this Hamiltonian and prepare the eigenstates.  
Therefore, VQSVD can be seen as a generalization of the VQE family that works not only for Hamiltonians, but also for more general 
non-Hermitian matrices.
%%%%%%%%%%%%%%%%%%%%%%%%%%%%%%%%%%%%%%%%%%%%%%%%%%%%%

%%%%%%%%%%%%%%%%%%%%%%%%%%%%%%%%%%%%%%%%%%%%%%%%%%%%%
\section{Optimization of the loss function}
Finding optimal parameters $\{\bm\a^*, \bm\b^*\}$ is a significant part of variational hybrid quantum-classical algorithms. Both gradient-based and gradient-free methods could be used to do the optimization. Here, we provide analytical details on the gradient-based approach, and we refer to \cite{Benedetti2019a} for more information on the optimization subroutines in variational quantum algorithms.
Reference about gradients estimation via quantum devices can be found in Ref. \cite{Mitarai2018,Schuld2019,Ostaszewski2019}.

%\twocolumngrid
\subsection{Gradients estimation}
Here, we discuss the computation of the gradient of the global loss function $L(\bm\alpha,\bm\beta)$ by giving an analytical expression, and show that the gradients can be estimated by shifting parameters of the circuit used for evaluating the loss function. In Algorithm~\ref{alg:rftl}, to prepare states $\ket{u_{j}}$ and $\ket{v_{j}}$, we apply gate sequences $U=U_{\ell_1}...U_{1}$ and $V=V_{\ell_2}...V_{1}$ in turn to state $\ket{\psi_{j}}$, where each gate $U_{l}$ and $V_{k}$ are either fixed, e.g., C-NOT gate, or parameterized, for all $l=1,...,\ell_1$ and $k=1,...,\ell_2$. The parameterized gates $U_{l}$ and $V_{k}$ have forms $U_{l}=e^{-iH_{l}\alpha_{l}/2}$ and $V_{k}=e^{-iQ_{k}\beta_{k}/2}$, respectively, where $\alpha_{l}$ and $\beta_{k}$ are real parameters, and $H_{l}$ and $Q_{k}$ are tensor products of Pauli matrices. Hence the gradient of loss function $L$ is dependent on parameters $(\bm\alpha,\bm\beta)$ and the following proposition shows it can be computed on near-term quantum devices. 
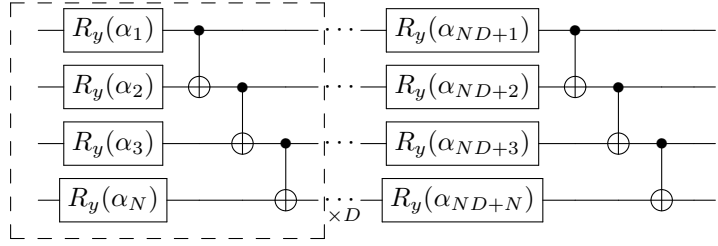
\begin{figure}[t]
\[\Qcircuit @C=0.8em @R=0.5em{
&\gate{R_y(\a_1)}&\ctrl{1}&\qw&\qw&\qw&\cdots& &\gate{R_y(\a_{ND+1})}&\ctrl{1}&\qw&\qw&\qw&\qw \\ 
&\gate{R_y(\alpha_2)}&\targ &\ctrl{1}&\qw&\qw&\cdots& &\gate{R_y(\alpha_{ND+2})}&\targ&\ctrl{1}&\qw&\qw  &\qw\\
&\gate{R_y(\alpha_3)}&\qw&\targ&\ctrl{1}&\qw&\cdots& &\gate{R_y(\alpha_{ND+3})}&\qw&\targ&\ctrl{1}&\qw  &\qw \\
&\gate{R_y(\alpha_N)}&\qw&\qw&\targ&\qw^{\quad \quad \quad \,\times D}&\cdots& &\gate{R_y(\alpha_{ND+N})}&\qw&\qw&\targ&\qw&\qw
% \qw & \gate{R_y(\alpha_3)}  &\qw  &\targ  &\ctrl{+1}   &\gate{R_y(\alpha_7)} &\qw &\qw    &\targ &\ctrl{+1} &\qw  \\
% \qw & \gate{R_y(\alpha_4)} &\qw   &\qw    &\targ    &\gate{R_y(\alpha_8)}   &\qw &\qw    &\qw     &\targ  &\qw
% \gategroup{1}{1}{4}{6}{.7em}{\{}
% \gategroup{1}{1}{4}{7}{.7em}{\}}
\gategroup{1}{1}{4}{5}{2.07em}{--}
}\]
\caption{Hardware-efficient ansatz used in the simulation for both $U(\bm \a)$ and $V(\bm \b)$. The parameters are optimized to minimize the loss function $L(\bm\a, \bm \b)$. $D$ denotes the number of repetitions of the same block (denoted in the dashed-line box) consists of a column of single-qubit rotations about the $y$-axis $R_y(\a_j)$ following by a layer of CNOT gates which only connects the adjacent qubits.}
\label{fig:Ansatz}
\end{figure}

\begin{proposition}
\label{prop:loss:function:gradient}
The gradient of loss function $L(\bm\alpha,\bm\beta)$ can be estimated on near-term quantum devices and its explicitly form is defined as follows,
\begin{align}
\nabla L(\bm\alpha,\bm\beta)=\left(\frac{\partial L}{\partial\alpha_{1}},...,\frac{\partial L}{\partial\alpha_{\ell_1}},\frac{\partial L}{\partial\beta_{1}},...,\frac{\partial L}{\partial\beta_{\ell_2}}\right).
\end{align}
Particularly, the derivatives of $L$ with respect to $\alpha_{l}$ and $\beta_{k}$ can be computed using following formulas, respectively,
\begin{align}
    \frac{\partial L}{\partial \alpha_{l}}&=\frac{1}{2}L(\bm\alpha_{l},\bm\beta),\label{eq:partial_derivative:alpha}\\
    \frac{\partial L}{\partial\beta_{k}}&=\frac{1}{2}L(\bm\alpha,\bm\beta_{k}),\label{eq:partial_derivative_beta}
\end{align}
where notations $\bm\alpha_{l}$ and $\bm\beta_{k}$ denote parameters $\bm\alpha_{l}=(\alpha_{1},...,\alpha_{l}+\pi,...,\alpha_{\ell_1})$ and $\bm\beta_{k}=(\beta_{1},...,\beta_{k}-\pi,...,\beta_{\ell_2})$.
\end{proposition}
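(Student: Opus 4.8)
The plan is to derive the parameter-shift identities \eqref{eq:partial_derivative:alpha}--\eqref{eq:partial_derivative_beta} by a short direct calculation, and then to obtain the estimability claim for free from Proposition~\ref{prop:loss:function:estimation}. The structural fact driving everything is that in each summand $g_j(\bm\alpha,\bm\beta)\coloneqq\bra{\psi_j}U(\bm\alpha)^{\dagger}MV(\bm\beta)\ket{\psi_j}$ of $L$, every parameterized gate occurs \emph{exactly once}: the gate $U_l=e^{-iH_l\alpha_l/2}$ appears only as the single factor $U_l^{\dagger}=e^{+iH_l\alpha_l/2}$ inside $U(\bm\alpha)^{\dagger}=U_1^{\dagger}\cdots U_{\ell_1}^{\dagger}$, and $V_k=e^{-iQ_k\beta_k/2}$ appears only once inside $V(\bm\beta)=V_{\ell_2}\cdots V_1$. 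Thus each $g_j$ depends \emph{affinely}, not quadratically, on each individual gate --- the feature that makes a one-sided $\pi$-shift (rather than the usual two-term $\pm\pi/2$ rule) sufficient here.

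First I would fix $l$ and split $U(\bm\alpha)^{\dagger}=L_l\,U_l^{\dagger}\,R_l$ where $L_l=U_1^{\dagger}\cdots U_{l-1}^{\dagger}$ and $R_l=U_{l+1}^{\dagger}\cdots U_{\ell_1}^{\dagger}$ are independent of $\alpha_l$. Writing $\bra{a_j}=\bra{\psi_j}L_l$ and $\ket{b_j}=R_l\,M\,V(\bm\beta)\ket{\psi_j}$ (the vector $\ket{b_j}$ need not be normalized, since $M$ is not assumed unitary, but this plays no role), we have $g_j=\bra{a_j}U_l^{\dagger}\ket{b_j}$. Since $H_l$ is a tensor product of Paulis, $H_l^{2}=I$, so $U_l^{\dagger}=\cos(\alpha_l/2)\,I+i\sin(\alpha_l/2)\,H_l$, and hence $g_j=\cos(\alpha_l/2)\braket{a_j}{b_j}+i\sin(\alpha_l/2)\braandket{a_j}{H_l}{b_j}$, an expression of the form $A_j\cos(\alpha_l/2)+B_j\sin(\alpha_l/2)$ with $A_j,B_j$ independent of $\alpha_l$. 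Differentiating and using $-\sin(\theta/2)=\cos((\theta+\pi)/2)$, $\cos(\theta/2)=\sin((\theta+\pi)/2)$ collapses $\partial_{\alpha_l}g_j$ into a single shifted evaluation, $\partial_{\alpha_l}g_j=\tfrac{1}{2}\,g_j\big|_{\alpha_l\mapsto\alpha_l+\pi}$; taking real parts and summing against the weights $q_j$ gives \eqref{eq:partial_derivative:alpha}. For $\beta_k$ I would isolate the single factor $V_k=\cos(\beta_k/2)I-i\sin(\beta_k/2)Q_k$ inside $V(\bm\beta)$ and repeat the computation verbatim; the opposite sign in the exponent of $V_k$ relative to $U_l^{\dagger}$ is exactly what fixes the direction of the $\pi$-shift appearing in \eqref{eq:partial_derivative_beta}.

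Finally, the estimability statement follows immediately: by the shift rules just derived, each component of $\nabla L$ equals $\tfrac{1}{2}$ times a loss function of \emph{precisely the same form as} $L$, evaluated at the classically modified parameter vector $\bm\alpha_l$ or $\bm\beta_k$. Hence Proposition~\ref{prop:loss:function:estimation} applies without change, and every partial derivative --- and therefore the whole gradient --- is obtained by rerunning the very Hadamard-test circuits of Section~\ref{sec:cost} with one rotation angle shifted by $\pi$.

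I expect the only genuine difficulty to be bookkeeping rather than analysis: one must cleanly extract a single gate from a long ordered product and, for the $\bm\alpha$ derivatives, from the Hermitian conjugate of that product (where the exponent flips sign), and keep the shift directions straight so that the coefficients $\tfrac{1}{2}$ and the signs in \eqref{eq:partial_derivative:alpha}--\eqref{eq:partial_derivative_beta} come out consistently. No spectral-gap, convexity, or approximation input is needed; the entire claim rests on $H_l^{2}=Q_k^{2}=I$ together with the affine (not quadratic) dependence of each summand on each parameter.
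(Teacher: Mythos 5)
Your overall strategy is essentially the paper's: each parameterized gate occurs exactly once in each summand, so isolating it and using $H_l^2=Q_k^2=I$ yields a single-term $\pi$-shift rule, and estimability then follows from Proposition~\ref{prop:loss:function:estimation} because the shifted quantity is again a loss function of the same form. Your $\alpha_l$ computation is correct and is just a repackaging of the paper's derivation (which writes $\partial_{\alpha_l}U_l^{\dagger}=\tfrac{i}{2}H_lU_l^{\dagger}$ and then absorbs $H_l$ via $U_l(\pi)=-iH_l$), and the closing estimability paragraph is fine.

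The gap is in the $\beta_k$ step. Your claim that ``the opposite sign in the exponent of $V_k$ \ldots is exactly what fixes the direction of the $\pi$-shift'' is false, and repeating your $\alpha$ computation verbatim does \emph{not} produce Eq.~\eqref{eq:partial_derivative_beta}. For any function of the form $f(\theta)=A\cos(\theta/2)+B\sin(\theta/2)$ one has $f'(\theta)=\tfrac{1}{2}f(\theta+\pi)$ identically in $A$ and $B$, while $f(\theta-\pi)=-f(\theta+\pi)$; the sign of the exponent only changes the coefficient $B$ of $\sin(\theta/2)$ (from $+i\bra{a_j}H_l\ket{b_j}$ to $-i$ times the analogous matrix element of $Q_k$), not the direction of the shift. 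Hence your own method yields $\partial L/\partial\beta_{k}=\tfrac{1}{2}L(\bm\alpha,\beta_{1},\ldots,\beta_{k}+\pi,\ldots,\beta_{\ell_2})=-\tfrac{1}{2}L(\bm\alpha,\bm\beta_{k})$, which differs by an overall sign from the formula you are asked to prove. This is not something you could have argued around: the stated $\beta$-formula carries a sign slip that is visible in the paper's own appendix, where $\partial_{\beta_k}V_k=-\tfrac{i}{2}Q_kV_k$ is recorded correctly but the displayed expression for $\partial L/\partial\beta_k$ then attaches $+\tfrac{i}{2}$ to the corresponding term. You should either derive the $+\pi$ shift for $\beta_k$ as well (the symmetric, correct form) or keep the $-\pi$ shift with an explicit minus sign; as written, your proof asserts a conclusion that your own calculation, carried out honestly, contradicts. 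The sign does not affect the estimability claim, but it does matter for any gradient-based optimizer that would use these formulas.
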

\begin{proof}
Notice that the partial derivatives of loss function are given by Eqs.~\eqref{eq:partial_derivative:alpha}~\eqref{eq:partial_derivative_beta}, and hence the gradient is computed by shifting the parameters of circuits that are used to evaluate the loss function. Since the loss function can be estimated on near term quantum devices, claimed in Proposition~\ref{prop:loss:function:estimation}, thus, the gradient can be calculated on near-term quantum devices. 

The derivations of Eqs.~\eqref{eq:partial_derivative:alpha}~\eqref{eq:partial_derivative_beta} use the fact that $\partial_{\theta}({\rm Re}z(\bm\theta))={\rm Re}(\partial_{\theta}z(\bm\theta))$, where $z(\bm\theta)$ is a parameterized complex number, and more details of derivation are deferred to Appendix~\ref{appendix:loss:function:gradient}.
\end{proof}

\subsection{Barren plateaus}
It has been shown that when employing hardware-efficient ansatzes (usually problem agnostic) \cite{Kandala2017},  global cost functions  $\langle\hat{O}\rangle = \tr [\hat{O}U(\bm \theta) \rho U(\bm \theta)^\dagger]$ of an observable $\hat{O}$ are untrainable for large problem sizes since they exhibit exponentially vanishing gradients with respect to the qubit number $N$ which makes the optimization landscape flat (i.e., barren plateaus \cite{McClean2018}). Consequently, traditional optimization methods including the Adam optimizer utilized in our numerical experiments are impacted. This trainability issue happens even when the ansatz is short depth \cite{Cerezoa} and the work~\cite{Sharma2020} showed that the barren plateau phenomenon could also arise in the architecture of dissipative quantum neural networks \cite{Poland2020,Bondarenko2020,Beer2020}.

Several approaches have been proposed to mitigate this problem. One can either implementing identity-block initialization strategy \cite{grant2019initialization} or employing the technique of local cost \cite{Cerezoa}, where the local cost function is defined such that one firstly construct a local observable $\hat{O}_L$, calculating the expectation value with respect to each individual qubit rather than gathering information in a global sense, and finally adding up all the local contributions. The latter strategy has been verified to extend the trainable circuit depth up to $D \in \mathcal{O}(poly(log(N)))$. Recently, a new method of constructing adaptive Hamiltonians during the optimization loop has been proposed to resolve similar issues \cite{Cerezo2020a}. 
%\re{One sentence about the gradient-free optimization methods and BP is deleted here.}
% Moreover, one may also employ gradient-free optimization methods~\cite{Rios2013} such as genetic algorithms, simulated annealing, and Nelder–Mead method~\cite{Nelder1965} to address the barren plateau challenges.
%%%%%%%%%%%%%%%%%%%%%%%%%%%%%%%%%%%%%%%%%%%%%%

%%%%%%%%%%%%%%%%%%%%%%%%%%%%%%%%%%%%%%%%%%%%%%%
\section{Numerical experiments and applications}\label{sec:result}
Here we numerically simulate the VQSVD algorithm with randomly generated $8 \times 8$ non-Hermitian real matrices as a proof of concept. Then, to demonstrate the possibility of scaling VQSVD to larger and more exciting applications, we simulate VQSVD to compress some standard gray-scale $32\times 32$ images. Fig.~\ref{fig:Ansatz} shows the variational ansatz used. Since only real matrices are involved, the combination of $R_y$ rotation gates and CNOT is sufficient. For the case of complex matrices and different ansatzs, see Appendix~\ref{sec:supplemental:numerics}.  We choose the input states to be $\{\ket{\psi_j} \}= \{\ket{000}, \ket{001}, \cdots ,\ket{111}\}$ and the circuit depth $D$ is set to be $D=20$. The parameters $\{\a,\b\}$ are initialized randomly from an uniform distribution $[0, 2\pi]$. All simulations and optimization loop are implemented via Paddle Quantum~\cite{Paddlequantum} on the PaddlePaddle Deep Learning Platform~\cite{Paddle,Ma2019}.

 \begin{figure}[t]
 \centering
\includegraphics[height=65mm,width=0.65\columnwidth]{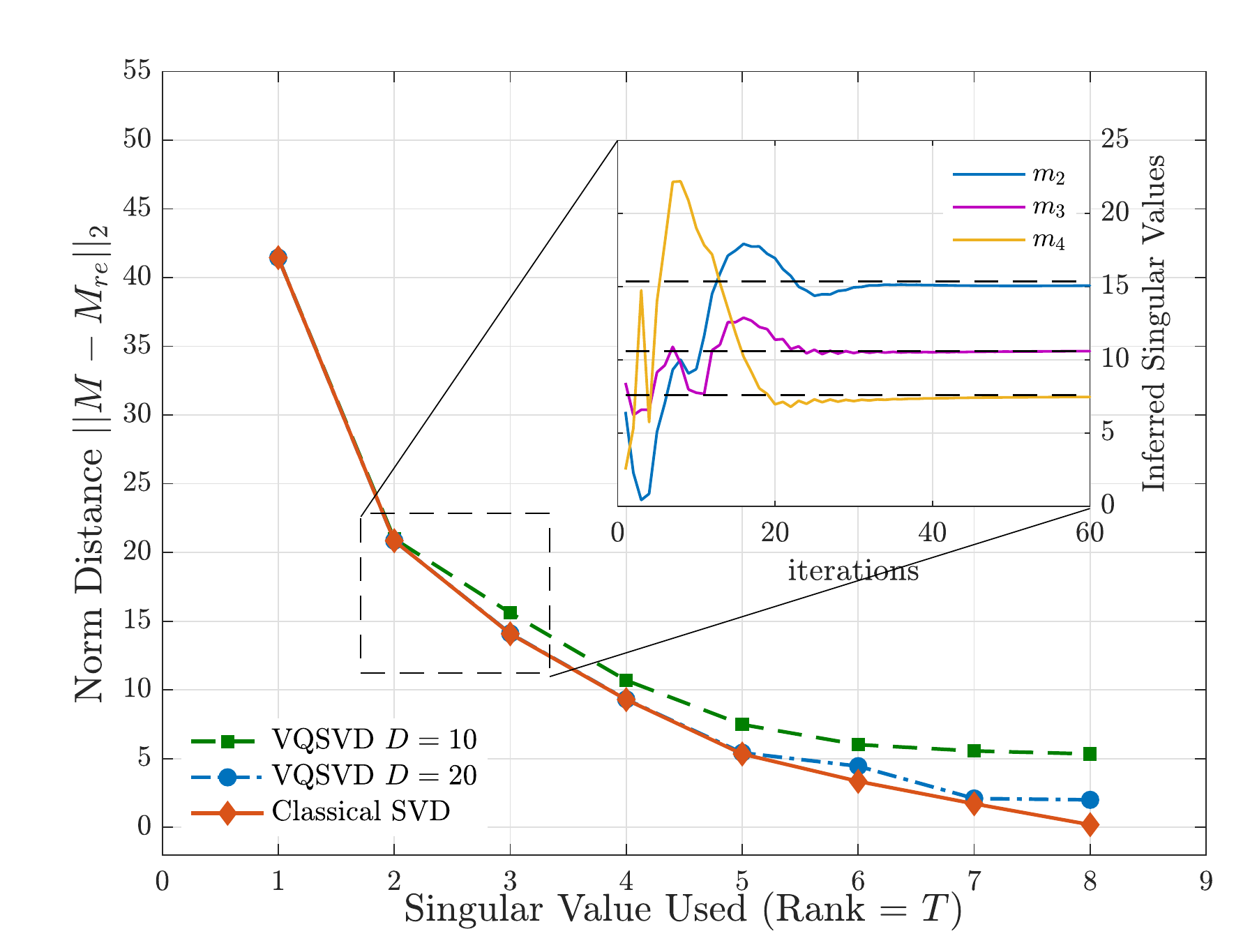}
\caption{\label{fig:VQSVD_performance} Distance measure between the reconstructed matrix $M_{re}$ and the original matrix $M$ via VQSVD with different circuit depth $D = \{ 10, 20\}$ and compare with the classical SVD method. Specifically, we plot the learning process (zoom in the plot) of selected singular values $\{ m_2, m_3, m_4\}$ when using the layered hardware-efficient ansatz illustrated in Fig.~\ref{fig:Ansatz} with circuit depth $D = 20$.}
\end{figure}

\subsection{Three-Qubit example}
The VQSVD algorithm described in Section \ref{sec:VQSVD} can find $T$ largest singular values of matrix $M_{n\times n}$ at once. Here, we choose the weight to be positive integers $(q_1, \cdots, q_T) = (T, T-1,\cdots, 1)$. Fig.~\ref{fig:VQSVD_performance} shows the learning process. One can see that this approach successfully find the desired singular values and one can approximate the original matrix $M$ with inferred singular values and vectors $M_{re}^{(T)} = \sum_{j=1}^{T} m_j\ket{\hat{u_j}}\bra{\hat{v_j}}$. The distance between $M_{re}$ and the original matrix $M$ is taken to be the matrix norm $||A_{n \times n}||_2 = \sqrt{\sum_{i,j=1}^{n} |a_{ij}|^2}$ where $a_{ij}$ are the matrix elements. As illustrated in Fig.~\ref{fig:VQSVD_performance}, the distance decreases as more and more singular values being used.

%\begin{align}
%M_{re}^{(T)} = 
%\begin{bmatrix}
%\vertbar & &\vertbar \\
%\hat{u_1} &\cdots & \hat{u_T} \\
%\vertbar &  &\vertbar \\
%\end{bmatrix}
%\begin{bmatrix}
%m_1 & \cdots & 0 \\
%\vdots &\ddots & \vdots \\
%0 &\cdots & m_T \\
%\end{bmatrix}
%\begin{bmatrix}
%\horzbar &\hat{v_1}^\dagger &\horzbar \\
%         &\vdots &  \\
%\horzbar &\hat{v_T}^\dagger &\horzbar  \\
%\end{bmatrix}
%\end{align}

\subsection{Image compression}
Next, we apply the VQSVD algorithm to compress a $32 \times 32$ pixel handwritten digit image taken from the MNIST dataset with only $7.81\%$ (choose rank to be $T=5$) of its original information. By comparing with the classical SVD method, one can see that the digit $\#7$ is successfully reconstructed with some background noise. Notice the circuit structure demonstrated in Fig.~\ref{fig:Ansatz} is ordinary and it is a not well-studied topic for circuit architecture.  Future studies are needed to efficiently load classical information into NISQ devices beyond the LCU assumption. 
\begin{figure}[h]
\centering
\includegraphics[height=85mm,width=0.65\columnwidth]{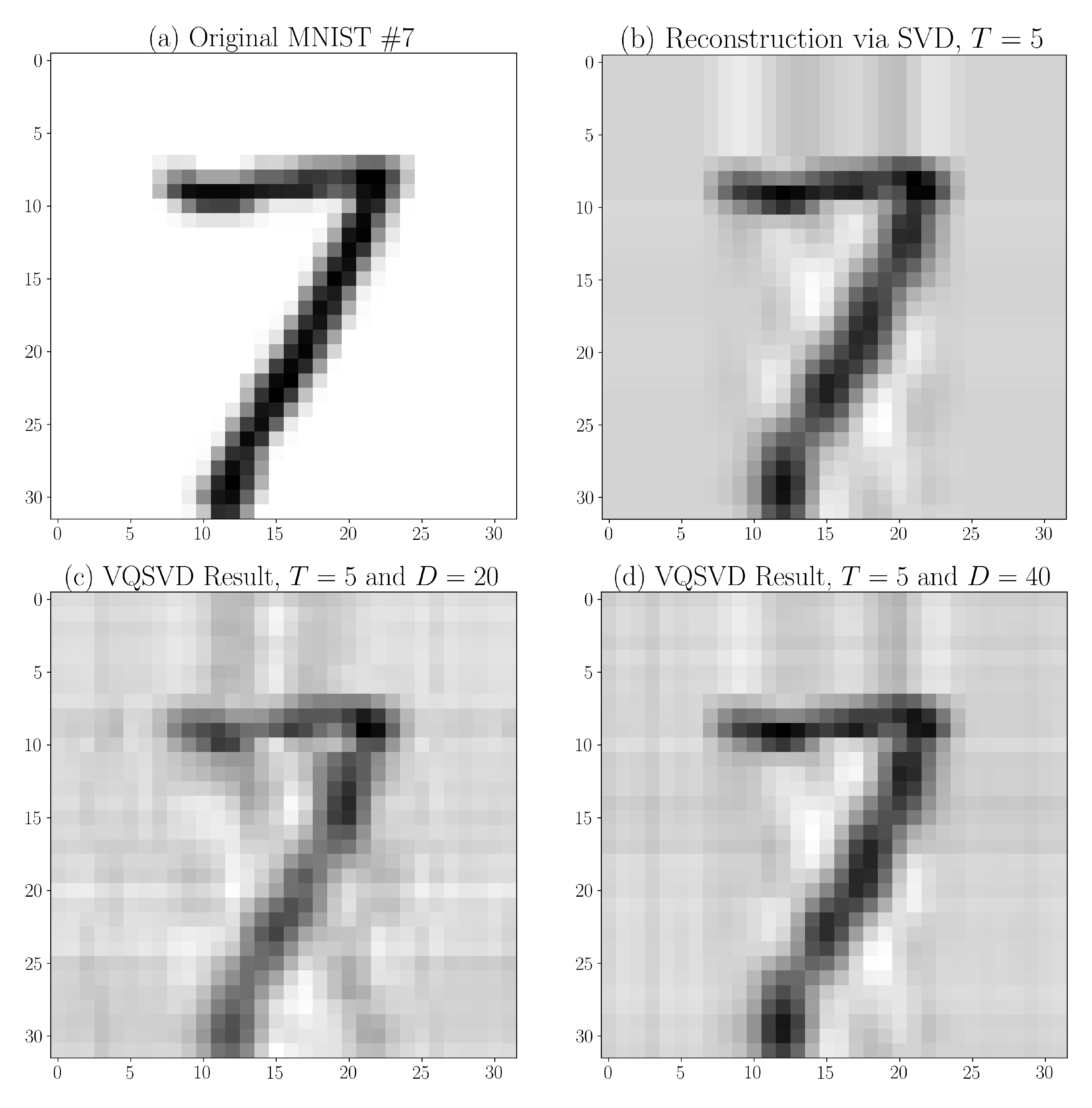}
\caption{\label{fig:image64} Performance of simulated 5-qubit VQSVD for image compression. (a) shows the original handwritten digit $\#7$ and it is compressed via classical SVD up to the largest $5$ singular values in (b). The performance of VQSVD is presented in (c) and (d) with the same rank $T=5$ but different circuit depth $D = \{20, 40\}$. 
%An experimental trick to reach better performance is setting the desired rank slightly higher $T=7$ during the learning process and later use only $T=5$ subspace to reconstruct the information. This works because the more weights we put on the better accuracy we will get. Here, we choose the weight to be $(32, 28,\cdots)$.
}
\end{figure}

%%%%%%%%%%%%%%%%%%%%%%%%%%%%%%%%%%%%%%%%%%%%%%%%%%%%%
%%%%%%%%%%%%%%%%%%%%%%%%%%%%%%%%%%%%%%%%%%%%%%%%%%%%%
%%%%%%%%%%%%%%%%%%%%%%%%%%%%%%%%%%%%%%%%%%%%%%%%%%%%%

\section{Solution quality estimation}\label{sec:verification}
After obtaining the results (singular values and vectors) from Algorithm~\ref{alg:rftl}, it is natural and desirable to have a method to benchmark or verify these results. In this section, we further introduce a procedure for verification purposes. Particularly, we propose a variational quantum algorithm for estimating the sum of largest $T$ squared singular values, i.e., $\sum_{j=1}^T d_{j}^2$, of a matrix $M\in\mathbb{C}^{m\times n}$ as a subroutine. In the following, we first define the error of the inferred singular values and singular vectors, and then show its estimation via a tool provided in Algorithm~\ref{algorithm:vqfne}. 
% the variational quantum algorithm for Frobenius norm estimation (VQFNE). 

Let $\{m_{j}\}_{j=1}^{T}$ denote the inferred singular values of the matrix $M$ that are arranged in descending order, and let $\{\ket{\hat{u}_{j}}\}_{j=1}^{T}$ \big($\{\ket{\hat{v}_{j}}\}_{j=1}^{T}$\big) denote the associated inferred left (right) singular vectors. The error of inferred singular values is defined as follows,
\begin{align}
\epsilon_{d}=\sum_{j=1}^{T}(d_{j}-m_{j})^{2},
\label{error:singular:values}
\end{align}
where $d_{j}$s are the exact singular values of matrix $M$ and also arranged in descending order. And the error $\epsilon_{v}$ of inferred singular vectors is defined below,
\begin{align}
\epsilon_{v}=\sum_{j=1}^{T} \parallel H\ket{\hat{e}_{j}^{+}}-m_{j}\ket{\hat{e}_{j}^{+}}\parallel^{2}+\sum_{j=1}^{T} \parallel H\ket{\hat{e}_{j}^{-}}+m_{j}\ket{\hat{e}_{j}^{-}}\parallel^{2},
\label{error:singular:vectors}
\end{align}
%where $H$ is a Hermitian of the form 
where $H$ is a Hermitian operator of the form $H=\op{0}{1}\otimes M+\op{1}{0}\otimes M^{\dagger}$, and $\ket{\hat{e}_{j}^{\pm}}=(\ket{0}\ket{\hat{u}_{j}}\pm\ket{1}\ket{\hat{v}_{j}})/\sqrt{2}$. 
It is worth pointing out that when inferred vectors $\ket{\hat{e}_{j}^{\pm}}$ approximate the eigenvectors $\ket{e_{j}^{\pm}}$ of $H$, i.e., $\epsilon_{v}\to0$, inferred singular vectors $\ket{\hat{u}_{j}}$ and $\ket{\hat{v}_{j}}$ approximate the singular vectors $\ket{u_{j}}$ and $\ket{v_{j}}$ respectively, and vice versa. 

Now we are ready to introduce the procedure for verifying the quality of VQSVD outputs. To quantify the errors, we exploit the fact that errors $\epsilon_{d}$ and $\epsilon_{v}$ are upper bounded. Specifically speaking,
\begin{align}
\epsilon_{d}&\leq\sum_{j=1}^{T}d_{j}^{2}-\sum_{j=1}^{T}m_{j}^{2},
\label{eq:error:d:upper:bound}\\
\epsilon_{v}&\leq 2(\sum_{j=1}^{T}d_{j}^{2}-\sum_{j=1}^{T}m_{j}^{2}),\label{eq:error:v:upper:bound}
\end{align}
%where $d_{j}$s are singular values of matrix $M$ and $m_{j}$s are inferred singular values. 
We refer the detailed proofs for Eqs.~\eqref{eq:error:d:upper:bound},~\eqref{eq:error:v:upper:bound} to Lemma~\ref{le:error:upper:bound} included in Appendix~\ref{sec:supplemental:verification}.

Thus, we only need to evaluate the sum $\sum_{j=1}^{T}d_{j}^{2}$, which is the Frobenius norm of matrix $M$, and the sum $\sum_{j=1}^{T}m_{j}^{2}$ independently. The latter can be directly computed from the outputs of VQSVD while the former quantity can be estimated through Algorithm~\ref{algorithm:vqfne} presented in Appendix~\ref{sec:supplemental:verification}.

%%%%%%%%%%%%%%%%%%%%%%%%%%%%%%%%%%%%%%%%%%%%%%%%%%%%%
%%%%%%%%%%%%%%%%%%%%%%%%%%%%%%%%%%%%%%%%%%%%%%%%%%%%%
%%%%%%%%%%%%%%%%%%%%%%%%%%%%%%%%%%%%%%%%%%%%%%%%%%%%%
\section{Applications}
In this section, we give a brief discussion about the application of VQSVD in recommendation systems, and some other problems such as polar decomposition. 

\paragraph{Recommendation systems}
% The goal of this section is to give a brief discussion about the application of the VQSVD algorithm in the recommendation system. 
The goal of a recommendation system is to provide personalized items to customers based on their ratings of items, or other information. Usually, the preferences of customers are modeled by an $m\times n$ matrix $\mathbf{A}$, where we assume there are $m$ customers and $n$ items. The element $\mathbf{A}_{lj}$ indicates the level that customer $l$ rates the item $j$. In particular, finding suitable recommendations means to locate the large entries of rows.

One commonly used method for locating the large entries of rows is matrix reconstruction. Given an arbitrary matrix and integer $k$, the goal is to output a small $k$-rank matrix as an approximate matrix. In recommendation system, such an low-rank approximate matrix $\mathbf{A}_{k}$ of the preference matrix $\mathbf{A}$ is usually used to recommend items. In this sense, our VQSVD algorithm finds its application in recommendation system by learning such a small rank matrix. To be more specific, the information about its singular values and vectors can be obtained.
% by substituting $T$ with $k$ in Algorithm~\ref{alg:rftl}.
% obtaining its singular values and vectors. 

After obtaining the low rank matrix $\mathbf{A}_{k}$, the process of recommendation for customer $l$ proceeds as follows: project the $l$th row vector of $\mathbf{A}$ onto the space spanned by the right singular vectors of $\mathbf{A}_{k}$. 
Specifically, denote the right singular vectors of $\mathbf{A}_{k}$ by $\ket{v_{1}},...,\ket{v_{k}}$, and projection operator by $\Pi\coloneqq \sum_{t=1}^{k}\op{v_{t}}{v_{t}}$, then apply $\Pi$ to the $l$th normalized row vectors $\ket{\mathbf{b}}$ of $\mathbf{A}$. Specially, let $\ket{\widehat{\bf b}}$ denote the normalized vector after projection, given below:
% Here we use the same idea as that of the quantum algorithm in Ref.~\cite{Kerenidis2016} which prepares the $l$-th row vector of matrix $\mathbf{A}_{k}$ by projecting $l$-th row vector $\ket{\mathbf{b}}$ of matrix $\mathbf{A}$ onto the row space of matrix $\mathbf{A}$, i.e., construct a new state proportional to $(\mathbf{A}^{+}\mathbf{A})_{\sigma}\ket{\mathbf{b}}$, where $(\mathbf{A}^{+}\mathbf{A})_{\sigma}$ is an operator that projects onto the space of right singular vectors of $\mathbf{A}$ whose corresponding singular values are larger than $\sigma$, and the threshold $\sigma$ is determined by $k$ and $\delta$, (more information can be found in Ref.~\cite{Kerenidis2016}). To be more specific, let $\ket{\mathbf{\hat{b}}}$ denote the state after projection, and then $\ket{\mathbf{\hat{b}}}$ has the form below,for customer $l$ or $l$-th row vector of preference matrix $\mathbf{A}$,
\begin{align}
    \ket{\widehat{\mathbf{b}}}=\frac{\Pi\ket{\mathbf{b}}}{\parallel\Pi\ket{\mathbf{b}}\parallel}=\frac{1}{G}\sum_{t=1}^k\xi_{t}\ket{v_{t}},\label{recommendationprobability}
\end{align}
where the right singular vectors $\{\ket{v_{t}}\}$ form an orthogonal basis, $\xi_{t}$ are the corresponding coefficients, and $G$ denotes the normalization factor, i.e., $G=\sqrt{\sum_{t=1}^k(\xi_{t})^{2}}$. 

The state $\ket{\widehat{\mathbf{b}}}$ can be efficiently prepared via paramerized circuit $V(\bm\beta)$ as long as we can prepare the state $\frac{1}{G}\sum_{t=1}^{k}\xi_{t}\ket{\psi_{t}}$. Particularly, there are $k$ coefficients $\xi_t$ in Eq.~\eqref{recommendationprobability}, determining $\ket{\widehat{\mathbf{b}}}$. 
% Ultimately, to provide an item with high ratings for customer $l$, we can proceed by measuring the vector or quantum state $\ket{\widehat{\bf b}}$, which denotes the $l$th row vector of $\mathbf{A}$, in the computational basis and return the measurement outcome.
% 
% Particularly, the probability of observing outcome $j$ is given as:
% \begin{align}
%     \Pr(j)\equiv|\ip{j}{\widehat{\mathbf{b}}}|^{2}=\frac{1}{G^{2}}\sum_{t_{1},t_{2}=1}^{k}\xi_{t_{1}}\xi_{t_{2}}^{*}\ip{j}{v_{t_1}}\ip{v_{t_2}}{j}.\label{measurementprobability}
% \end{align}
% From Eq.~\eqref{measurementprobability}, the probability $\Pr(j)$ is determined by integer $k$ and $O(k^2)$ coefficients $\xi_{t}$ and $\ip{j}{v_{t}}$. Further, 
% Furthermore, the number $k$ is the low-rank of $\mathbf{A}_{k}$ implying that the probability $\Pr(j)$ can be efficiently computed. 
Further, we present a procedure for estimating these coefficients in Algorithm~\ref{alg:qlss} (shown below).
% as shown above, we can obtain $k$ largest singular values and corresponding singular vectors from VQSVD. The remaining is to get these coefficients. 
\begin{algorithm}[H] 
\caption{ }
\begin{algorithmic}[1] \label{alg:qlss}
\STATE Input:  matrix $\mathbf{A}\in\mathbb{R}^{n\times n}$, desired rank $k$, parametrized circuits $U(\bm\a)$ and $V(\bm\b)$ with initial parameters of $\bm\a$, $\bm\b$, and tolerance $\varepsilon$, and a circuit $U_{b}$ to prepare state $\ket{\mathbf{b}}\in\mathbb{R}^{n}$;

\STATE Run VQSVD in Algorithm~\ref{alg:rftl} with $\mathbf{A}$, $k$, $U(\bm\a)$, $V(\bm\b)$, $\varepsilon$, and return the optimal parameters $\bm\a^{*}$, $\bm\b^{*}$, and $\{d_{j}\}_{j=1}^{k}$ as singular values;

\FOR {$t=1,\cdots,k$}
\STATE Apply $V(\bm\b)$ to state $\ket{\psi_t}$ and obtain $\ket{v_t} = V(\bm\b^*)\ket{\psi_t}$; 
\STATE Apply $U_{b}$ to state $\ket{\mathbf{0}}$ and obtain $\ket{\mathbf{b}}=U_{b}\ket{\bf 0}$;
\STATE Compute $\xi_{t}=\ip{v_{t}}{\mathbf{b}}$ via Hadamard test;
\ENDFOR 
\STATE Output $\{\xi_{t}\}$ as coefficients in Eq.~\eqref{recommendationprobability}.
\end{algorithmic}
\end{algorithm}
Finally, measure $\ket{\widehat{\mathbf{b}}}$ in the computational basis and output the outcome. 
% In Algorithm~\ref{alg:qlss}, we assume there is a quantum circuit $U_{b}$ that can prepare quantum states corresponding to the row vectors of preference matrix $\mathbf{A}$. The coefficients in Eq.~\eqref{measurementprobability} can be obtained by inputting $(\ket{j}, \ket{v_{l}})$ and $(\ket{{\mathbf{b}}}, \ket{v_{l}})$ into the circuit of Hadamard test, respectively. 
% Ultimately, we can model a recommendation system that recommends item $j$ with probability $\Pr(j)$.

% Particularly, Algorithm~\ref{alg:qlss} may also apply to other linear system problems such as least square estimation, etc, and even fields beyond linear systems.
% It also may be of independence interest in areas beyond linear system problems.

% Our goal is to construct a probability distribution which samples a number $j$ with the same probability as measuring state  $\ket{\mathbf{\hat{b}}}$ in computational basis. Hence, we only need to determine the probability for each outcome of measurement $j$. The probability of acquiring outcome $j$ is 

% Thus, determining the probabilities is equivalent to determining coefficients $\xi_{l}$ and inner products $\ip{j}{v_{l}}$ for all $l$ as well as normalization factor $G$, which is a direct result of obtaining $\xi_{l}$s. It is easy to see that these values can be computed by employing VQLSS.

\paragraph{Polar decomposition and other applications}
Another important application of our VQSVD algorithm is finding the polar decomposition which has many applications in linear algebra~\cite{higham1986computing,gower2004procrustes}. Recently, Lloyd et al~\cite{lloyd2020quantum} proposed a quantum polar decomposition algorithm that performs in a deterministic way. For a given a complex matrix $M\in\mathbb{C}^{n\times n}$, the right polar decomposition is defined as follows,
\begin{align}
    M=WP,
\end{align}
where $W$ is a unitary and $P$ is a positive-semidefinite Hermitian matrix. Particularly, suppose the singular value decomposition of $M$ is calculated through the VQSVD algorithm $M=UD V^{\dagger}$, then $W=UV^{\dagger}$ and $P=VD V^{\dagger}$. It is interesting to explore whether our VQSVD could be applied to polar decomposition. 
% We therefore present a detailed discussion in Appendix~\ref{application.polar.decomposition}.

For Hermitian matrices, the VQSVD algorithm can be applied as an eigensolver since singular value decomposition reduces to spectral decomposition in this case. 
Recently, some work has been proposed to extract eigenvalues and reconstruct eigenvectors of Hermitian matrices~\cite{Jones2019,nakanishi2019subspace}, density operators~\cite{larose2019variational}. 

% \ZX{Key change} By contrast, our approach generalizes the well-known variational quantum eigensolver (VQE) to a more powerful regime for non-hermitian data.

In  the  context  of  quantum  information,  SVD  could  be  used  to  compute  the  Schmidt  decomposition of bipartite pure states and we note that our VQSVD algorithm could also be applied to do Schmidt decomposition. Meanwhile, a recent work by Bravo-Prieto et al.~\cite{Bravo-Prieto2019a} introduces a novel varitional quantum algorithm for obtaining the Schmidt coefficients and associated orthonormal vectors of a bipartite pure state. In comparison, our VQSVD algorithm can deal with the SVD of the general matrix, while Ref.~\cite{Bravo-Prieto2019a} is intended for the Schmidt decomposition of  bipartite pure states.

%%%%%%%%%%%%%%%%%%%%%%%%%%%%%%%%%%%%
\section{Discussion and outlook}
To summarize, we have presented a variational quantum algorithm for singular value decomposition with NISQ devices. One key contribution is to design a loss function that could be used to train the quantum neural networks to learn the left and right singular vectors and output the target singular values. Further improvements on the performance of our VQSVD algorithm may be done for sparse matrices together with more sophisticated ansatzes. We have numerically verified our algorithm for singular value decomposition of random matrices and image compression and proposed extensive applications in solving linear systems of equations. As a generalization of the family of VQE algorithms on Hamiltonian diagonalization (or spectral decomposition), the VQSVD algorithm may have potential applications in quantum chemistry, quantum machine learning. and quantum optimization in the NISQ era.

\xw{Our algorithm theoretically works well for large-scale problems if we could have carefully handled the trainability and accuracy of QNN. Just like regular variational quantum algorithms, there are still several challenges that need to be addressed to maintain the hope of achieving quantum speedups when scaling up to large-scale problems on NISQ devices. We refer to the recent review papers \cite{Cerezo2020,Endo2020,Bharti2021} for potential solutions to these challenges.
}

One future direction is to develop near-term quantum algorithms for non-negative matrix factorization~\cite{Koren2009} which have various applications and broad interests in machine learning. See \cite{Du2018} as an example of the quantum solution. Another interesting direction is to develop near-term quantum algorithms for higher-order singular value decomposition~\cite{Kolda2009a}.

\section*{Acknowledgements}
We would like to thank Runyao Duan for helpful discussions. This work was done when Z. S. was a research intern at Baidu Research. Y. W. acknowledged support from the Baidu-UTS AI Meets Quantum project and the Australian Research Council (Grant No: DP180100691).
% 

%\bibliographystyle{plainnat}
%\bibliographystyle{alpha}
%\bibliographystyle{IEEEtran}
%\bibliographystyle{apsrev4-1}
%\bibliographystyle{myhamsplain2}

%\bibliographystyle{apsrev4-1}
% \bibliography{Library}
\bibliographystyle{unsrtnat}
\bibliography{Reference}

\appendix

%%%%%%%%%%%%%%%%%%%%%%%%%%%%%%%%%%%%%%%%%%%%%%%%%%%%%
\section{Proof details for Proposition~\ref{prop:loss:function:gradient}}
\label{appendix:loss:function:gradient}
In this section, we show a full derivation on the gradients of the loss function in our VQSVD algorithm.
\begin{align}
L(\bm\a,\bm\b) = \sum_{j=1}^T q_j\times \text{Re} \bra{\psi_j}U^{\dagger}(\bm\a) M V(\bm\b)\ket{\psi_j}
\end{align} 
The real part can be estimated via Hadamard test. Equivalently saying that,
\begin{align}
\text{Re} \bra{\psi_j} \hat{O} \ket{\psi_j}
= \frac{1}{2} \bigg[ \bra{\psi_j} \hat{O} \ket{\psi_j} + \bra{\psi_j} \hat{O}^\dagger \ket{\psi_j} \bigg]
\end{align}
Consider the parametrized quantum circuit $U(\bm \a)= \Pi_{i = n}^{1}U_i(\a_i)$ and $V(\bm \b)= \Pi_{j = m}^{1}V_j(\b_j)$. For convenience, denote $U_{i:j} = U_i \cdots U_j$. We can write the cost function as:
\begin{align}
L(\bm\a,\bm\b) = \frac{1}{2} &\sum_{j=1}^T q_j \times  \bigg[\bra{\psi_j}U_{1:n}^{\dagger}(\a_{1:n}) M V_{m:1}(\b_{m:1})\ket{\psi_j} \notag \\
& + \bra{\psi_j}V_{1:m}^{\dagger}(\b_{1:m}) M^\dagger U_{n:1}(\a_{n:1})\ket{\psi_j}\bigg]
\end{align} 
Absorb most gates into state $\ket{\psi_j}$ and matrix $M$,
\begin{align}
L(\bm\a,\bm\b) = \frac{1}{2} &\sum_{j=1}^T q_j \times  \bigg[\bra{\phi_j}U_{\ell}^{\dagger}(\a_\ell) G V_{k}(\b_k)\ket{\varphi_j} \notag \\
& + \bra{\varphi_j}V^{\dagger}_{k}(\b_k) G^\dagger U_{\ell}(\a_\ell)\ket{\phi_j}\bigg]
\end{align} 
where $\ket{\varphi_j} = V_{k-1:1}\ket{\psi_j}$, $\ket{\phi_j} = U_{\ell-1:1}\ket{\psi_j}$ and $G \equiv U^\dagger_{\ell+1:n} M V_{m:k+1}$.
We assume $U_\ell=e^{-i\a_\ell H_\ell/2}$ is generated by a Pauli product $H_\ell$ and same for $V_k = e^{-i\b_k Q_k/2}$. The derivative with respect to a certain angle is
\begin{align}
\frac{\partial U_{1:n}}{\partial \a_\ell} &= -\frac{i}{2}U_{1:\ell-1} (H_\ell U_\ell) U_{\ell+1:n}\\
\frac{\partial V_{1:m}}{\partial \b_k} &= -\frac{i}{2}V_{1:k-1} (Q_k V_k) V_{k+1:m}
\end{align}
Thus the gradient is calculated to be
\begin{align}
\frac{\partial L(\bm\a,\bm\b)}{\partial \a_\ell} = \frac{1}{2} &\sum_{j=1}^T q_j \times  \bigg[\frac{i}{2}\bra{\phi_j} H^\dagger_\ell U_{\ell}^{\dagger}(\a_\ell) G V_{k}(\b_k)\ket{\varphi_j} \notag \\
& -\frac{i}{2} \bra{\varphi_j}V^{\dagger}_{k}(\b_k) G^\dagger H_\ell U_{\ell}(\a_\ell)\ket{\phi_j}\bigg]
\end{align}
With the following property, we can absorb the Pauli product $H_\ell$ by an rotation on $\a_\ell\rightarrow \a_\ell +\pi$
\begin{align}
U_\ell(\pm \pi) & = e^{\mp i\pi H_\ell/2} \notag \\
& = \cos(\frac{\pi}{2}) I \mp i \sin(\frac{\pi}{2})H_\ell \notag \\
& = \mp i H_\ell
\end{align}
Plug it back and we get,
\begin{align}
\frac{\partial L(\bm\a,\bm\b)}{\partial \a_\ell} = \frac{1}{4} &\sum_{j=1}^T q_j \times  \bigg[\bra{\phi_j}  U_{\ell}^{\dagger}(\a_\ell + \pi) G V_{k}(\b_k)\ket{\varphi_j} \notag\\
& + \bra{\varphi_j}V^{\dagger}_{k}(\b_k) G^\dagger  U_{\ell}(\a_\ell+\pi)\ket{\phi_j}\bigg]
\end{align}
This can be further simplified as
\begin{align}
\frac{\partial L(\bm\a,\bm\b)}{\partial \a_\ell} & = \frac{1}{2} \sum_{j=1}^T q_j\times \text{Re} \bra{\psi_j}U^{\dagger}(\a_{\ell}+\pi) M V(\bm\b)\ket{\psi_j} \notag \\
& = \frac{1}{2} L(\a_\ell + \pi,\bm\b)
\end{align}
Similarly, for $\b_k$ we have
\begin{align}
\frac{\partial L(\bm\a,\bm\b)}{\partial \b_k} = \frac{1}{2}& \sum_{j=1}^T q_j \times  \bigg[\frac{i}{2}\bra{\phi_j}  U_{\ell}^{\dagger}(\a_\ell) G  Q_k V_{k}(\b_k)\ket{\varphi_j} \notag \\
& -\frac{i}{2} \bra{\varphi_j} Q^\dagger_k V^{\dagger}_{k}(\b_k) G^\dagger U_{\ell}(\a_\ell)\ket{\phi_j}\bigg]
\end{align}
This can be further simplified as
\begin{align}
\frac{\partial L(\bm\a,\bm\b)}{\partial \b_k} &= \frac{1}{2} \sum_{j=1}^T q_j\times \text{Re} \bra{\psi_j}U^{\dagger}(\bm \a) M V(\b_k-\pi)\ket{\psi_j} \notag \\
& = \frac{1}{2} L(\bm \a,\b_k - \pi).
\end{align}
One can see from the above derivation that calculating the analytical gradient of VQSVD algorithm simply means rotating a specific gate parameter (angle $\a_\ell$ or $\b_k$) by $\pm \pi$ which can be easily implemented on near-term quantum devices.

%%%%%%%%%%%%%%%%%%%%%%%%%%%%%%%%%%%%%%%%%%%%%%%

\section{Supplemental material for cost evaluation}
\label{sec:supplemental:cost_evaluation}
\xw{In this section, we show how to apply the importance sampling technique to reduce the cost of estimating the loss function in Eq.~\eqref{eq:decom of M value}. For convenience, we restate Eq.~\eqref{eq:decom of M value} below.
\begin{align}\label{eq:cost_sampling}
    C(\bm\alpha,\bm\beta)=\sum_{k=1}^K  c_k \times \text{Re}\bra{\psi_j}U(\bm\a)^{\dagger} A_k V(\bm\b)\ket{\psi_j}.
\end{align}
We assume all coefficients $c_{k}$ in Eq.~\eqref{eq:decom of M value} are positive, since their signs can be absorbed into unitaries $A_k$. 
}

\xw{
Define an importance sampling in a sense that
\begin{equation}
    \mathbf{R}=
    \left\{
    \begin{array}{ll}
    A_{1} & \text{with probability $p_1=\frac{c_1}{\|\mathbf{c}\|_{\ell_1}}$}\\
    A_{2} & \text{with probability $p_2=\frac{c_2}{\|\mathbf{c}\|_{\ell_1}}$}\\
    \vdots & \vdots\\
    A_{K} & \text{with probability $p_1=\frac{c_K}{\|\mathbf{c}\|_{\ell_1}}$}
    \end{array}
    \right.
\end{equation}
where $\mathbf{c}=(c_1,c_2,\ldots,c_K)$, and $\|\cdot\|_{\ell_1}$ denotes the $\ell_1$-norm. Random variable $\mathbf{R}$ indicates that each unitary $A_k$ is selected at random with probability proportional to its weight. Here, we rewrite Eq.~\eqref{eq:cost_sampling} by using $\mathbf{R}$.
\begin{align}
    C(\bm\alpha,\bm\beta)=\mathbf{E}\left[\|\mathbf{c}\|_{\ell_1}\cdot{\rm Re}\bra{\psi_j}\mathbf{R}V(\bm\beta)\ket{\psi_j}\right].
\end{align}
}

\xw{
As $C(\bm\alpha,\bm\beta)$ is formed as an expectation, the sample mean is supposed to estimate it. By Chebyshev's inequality, $O(\mathbf{Var}/\epsilon^2)$ samples suffice to compute an estimate up to precision $\epsilon$ with high probability, where $\mathbf{Var}$ denotes the variance, and $\epsilon$ denotes the precision. By Chernoff bound, the probability can be further improved to $1-\delta$ costing an additional multiplicative factor $O(\log(1/\delta))$. Regarding $C(\bm\alpha,\bm\beta)$, the variance is bounded by $\|\mathbf{c}\|_{\ell_1}^2$. We, therefore, only need to choose $O(\|\mathbf{c}\|_{\ell_1}^2\log(1/\delta)/\epsilon^2)$ many unitaries to evaluate $C(\bm\alpha,\bm\beta)$. Viewed from this point, the cost of the evaluation is dependent on the $\|\mathbf{c}\|_{\ell_1}$ instead of the integer $K$. As a result, the loss evaluation could be efficient if the $\|\mathbf{c}\|_{\ell_1}$ is small, even there are exponentially many terms. 
}

\xw{
\textbf{Example.} Suppose we use VQSVD to compute SVDs of circulant matrices, which have applications in signal processing, image compression, number theory and cryptography. A circulant matrix $\mathbf{C}_{d}$ is given by
\begin{align}
    \mathbf{C}_{d}=
    \begin{bmatrix}
    c_0 & c_1 & \ldots & c_{d-1} \\
    c_{d-1} & c_0 & \ldots & c_{d-2}\\
    \vdots & \vdots & \ddots & \vdots \\
    c_{1} & c_{2} & \ldots & c_{0}
    \end{bmatrix}
\end{align}
Clearly, the matrix $\mathbf{C}_{d}$ is determined by the sequence $\mathbf{c}=(c_0,c_1,\ldots,c_{d-1})$. It can be easily decomposed into a weighted sum of cyclic permutations. 
\begin{align}
    \mathbf{C}_{d}=c_0I+c_{1}P_1+c_2P_2+\ldots+c_{d-1}P_{d-1},
\end{align}
where 
\begin{align}
    P_1=
    \begin{bmatrix}
    0 & 0 & \ldots & 0 & 1\\
    1 & 0 & \ldots & 0 & 0\\
    0 & 1 & \ldots & 0 & 0\\
    \vdots &  & \ddots &  & \vdots\\
    0 & 0 & \ldots & 1 & 0
    \end{bmatrix},P_2=
    \begin{bmatrix}
    0 & 0 & \ldots & 1 & 0\\
    0 & 0 & \ldots & 0 & 1\\
    1 & 0 & \ldots & 0 & 0\\
    \vdots & \ddots & 0 &  & \vdots\\
    0 & \ldots & 1 & 0 & 0
    \end{bmatrix},& \nonumber\\
    \ldots,P_{d-1}=
    \begin{bmatrix}
    0 & 1 & 0 & \ldots & 0\\
    0 & 0 & 1 & \ldots & 0\\
    0 & 0 & 0 & \ldots & 0\\
    \vdots &  &  & \ddots & 1\\
    1 & 0 & \ldots & 0 & 0
    \end{bmatrix}.& \nonumber
\end{align}
The loss for a circulant matrix in Eq.~\eqref{eq:cost_sampling} is given by
\begin{align}
    &\sum_{k=0}^{d-1}c_{k}\times {\rm Re}\bra{\psi_j}U(\bm\alpha)^\dagger P_{k}V(\bm\beta)\ket{\psi_j}\nonumber\\
    &=\mathbf{E}\left[\|\mathbf{c}\|_{\ell_1}\cdot{\rm Re}\bra{\psi_j}U(\bm\alpha)^\dagger P_{k}V(\bm\beta)\ket{\psi_j}\right]
\end{align}
Hence, for circulant matrices with feasible cost $\|\mathbf{c}\|_{\ell_1}$ (e.g., polynomial size), the loss evaluation could be quite efficient even for large-size problems. Our VQSVD algorithm is expected to find further applications in circulant matrix-related problems.
}

%%%%%%%%%%%%%%%%%%%%%%%%%%%%%%%%%%%%%%%%%%%%%%%
\section{Supplemental material for different ansatzs}
\label{sec:supplemental:numerics}
In this section, we supplement additional numerics on $8\times 8$ real matrices to explore the effect of different circuit ansatzs and later test our VQSVD algorithm on random complex matrices. First, we introduce several ansatz candidates.

\begin{figure}[h]
\centering
\text{(a)}\, \Qcircuit @C=0.7em @R=0.4em{
&\gate{U}&\ctrl{1}&\qw&\qw\\ 
&\gate{U}&\targ&\ctrl{1}&\qw\\
&\gate{U}&\qw&\targ&\qw^{\quad \quad \quad \,\times D_1}  
\gategroup{1}{2}{3}{4}{2.07em}{--}
}
\quad\text{(b)}\,\Qcircuit @C=0.7em @R=0.4em{
&\gate{U}&\ctrl{1}&\gate{U}   &\qw    &\qw      &\qw&\qw\\ 
&\gate{U}&\targ   &\gate{U}   &\gate{U}&\ctrl{1}&\gate{U}&\qw\\
&\qw     &\qw     &\qw        &\gate{U}&\targ   &\gate{U}&\qw^{\quad \quad \quad \quad \,\times D_2} 
\gategroup{1}{2}{3}{7}{2.07em}{--}
}
\vspace{2mm}
\\
\text{(c)}\,\Qcircuit @C=0.7em @R=0.4em{
&\gate{U} &\ctrl{1}&\qw     &\targ&\qw\\ 
&\gate{U} &\targ   &\ctrl{1}&\qw&\qw\\
&\gate{U} &\qw     &\targ   &\ctrl{-2} &\qw^{\quad \quad \quad \,\times D_3} 
\gategroup{1}{2}{3}{5}{2.07em}{--}
}
\quad\text{(d)}\, \Qcircuit @C=0.7em @R=0.4em{
&\gate{U} &\ctrl{1}&\qw       &\gate{U} &\targ &\qw&\qw\\ 
&\gate{U} &\targ   &\ctrl{1}  &\gate{U} &\qw  &\targ&\qw\\
&\gate{U} &\qw     &\targ     &\gate{U} &\ctrl{-2} &\ctrl{-1}&\qw^{\quad \quad \quad \,\times D_4} 
\gategroup{1}{2}{3}{7}{2.07em}{--}
}
\caption{Ansatz candidates considered in numerical experiments. When the input data $M$ contains only real elements, we take $U = R_y(\alpha_j)$ to reduce the trainable parameters. Otherwise, we choose $U = R_z(\theta_j)R_y(\phi_j)R_z(\varphi_j)$ as a general rotation on the Bloch sphere to enhance expressibility. (a) is the basic hardware-efficient ansatz. (b) consists of dressed CNOT gates (rotation gates on both sides). (c) and (a) differs by an extra CNOT which forms a circular entangling structure and is expected to have a stronger entangling capability. (d) is designed to explore the influence of CNOT gates.}
\label{fig:AnsatzFamily}
\end{figure}
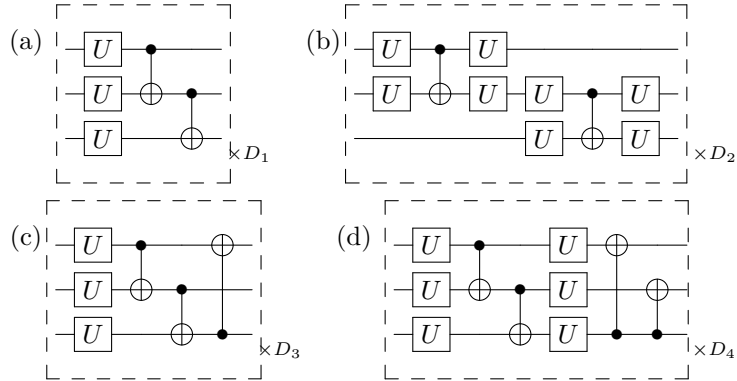

After introducing all the candidates, we test the distance measure on a real matrix similar to Fig.~\ref{fig:VQSVD_performance}. For fair comparison, we want all the ansatz candidates to have the same total amount of trainable parameters $N_{tot}=24$ and leads to the specific depth $D_1 = 8, D_2 = 3, D_3 = 8, D_4 = 4$. Other setups including desired rank $T=8$, the weight coefficient $(T,T-1,\cdots,1)$, the random seed for generating trainable parameters, number of optimization iterations ITR=200, learning rate LR=0.05, and the Adam optimizer are fixed. The result is illustrated in Fig.~\ref{fig:AnsatzFamily_Comparison}.
The advantage of candidate (a) and (d) is observed consistently in numerical experiments. Finally, we repeat the same experiments on random $8\times 8$ complex matrices. No clear advantage of a specific ansatz candidate is observed under this setup. Most of the time, candidate (a), (c) and (d) would return a better result compare to candidate (b). But in certain cases, the performance of candidate (a) could be very bad due to the lack of entangling capability. We report a case study in Fig.~\ref{fig:AnsatzFamily_Complex}. Further studies are needed to check more ansatz candidates by manipulating the entangling structure.

 \begin{figure}[t]
 \centering
\includegraphics[width=0.7\columnwidth]{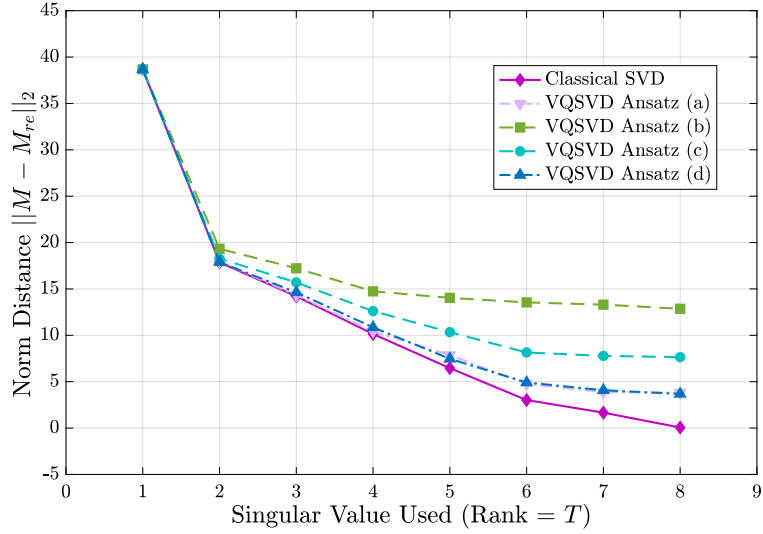}
\caption{\label{fig:AnsatzFamily_Comparison} Distance measure between the reconstructed matrix $M_{re}$ and the original matrix $M$ (real elements only) via VQSVD with different ansatz candidate and compare with the classical SVD method. }
\end{figure}

 \begin{figure}[t]
 \centering
\includegraphics[width=0.7\columnwidth]{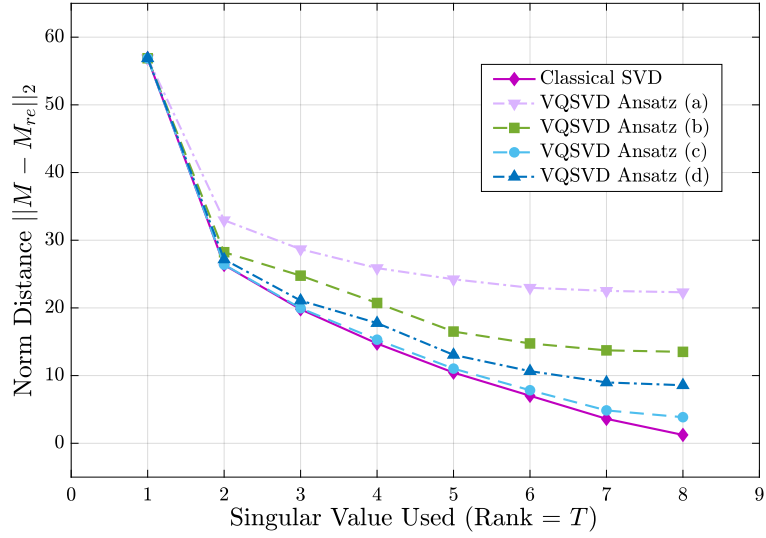}
\caption{\label{fig:AnsatzFamily_Complex} Distance measure between the reconstructed matrix $M_{re}$ and the original matrix $M$ (complex elements) via VQSVD with different ansatz candidate and compare with the classical SVD method. The number of trainable parameters is now changed to $N_{tot}=144$.}
\end{figure}

%%%%%%%%%%%%%%%%%%%%%%%%%%%%%%%%%%%%%%%%%%%%%%%
\section{Supplemental material for verification of the solution quality}\label{sec:supplemental:verification}
In this section, we provide the necessary proofs in Sec.~\ref{sec:verification} and detailed discussions on variational quantum Frobenius norm estimation algorithm. 
\subsection{Definitions}
Recall that the error of inferred singular values is defined as follows,
\begin{align}
\epsilon_{d}=\sum_{j=1}^{T}(d_{j}-m_{j})^{2},
\label{error:singular:values}
\end{align}
where $d_{j}$ are the exact singular values of matrix $M$ and also arranged in descending order. And the error $\epsilon_{v}$ of inferred singular vectors is defined below,
\begin{align}
\epsilon_{v}=\sum_{j=1}^{T} \parallel H\ket{\hat{e}_{j}^{+}}-m_{j}\ket{\hat{e}_{j}^{+}}\parallel^{2}+\sum_{j=1}^{T} \parallel H\ket{\hat{e}_{j}^{-}}+m_{j}\ket{\hat{e}_{j}^{-}}\parallel^{2},
\label{error:singular:vectors}
\end{align}
where $H$ is a Hermitian of the form $H=\op{0}{1}\otimes M+\op{1}{0}\otimes M^{\dagger}$, and $\ket{\hat{e}_{j}^{\pm}}=(\ket{0}\ket{\hat{u}_{j}}\pm\ket{1}\ket{\hat{v}_{j}})/\sqrt{2}$. 
The quantity $\|H\ket{\hat{e}_{j}^{\pm}}\mp m_{j}\ket{\hat{e}_{j}^{\pm}}\|^{2}$ quantifies the component of $H\ket{\hat{e}_{j}^{\pm}}$ that is perpendicular to $\ket{\hat{e}_{j}^{\pm}}$, which follows from $(I-\op{\hat{e}_{j}^{\pm}}{\hat{e}_{j}^{\pm}})H\ket{\hat{e}_{j}^{\pm}}$.

It is worth pointing out that when inferred vectors $\ket{\hat{e}_{j}^{\pm}}$ approximate the eigenvectors $\ket{e_{j}^{\pm}}$, where $\ket{e_{j}^{\pm}}=(\ket{0}\ket{u_j}\pm\ket{1}\ket{v_j})/\sqrt{2}$, of $H$, i.e., $\epsilon_{v}\to0$, inferred singular vectors $\ket{\hat{u}_{j}}$ and $\ket{\hat{v}_{j}}$ approximate the singular vectors $\ket{u_{j}}$ and $\ket{v_{j}}$ respectively, and vice versa. On the other hand, the error $\epsilon_{v}$ which is used to quantify the extent that vectors $\ket{\hat{e}_{j}^{\pm}}$ approximate eigenvector $\ket{e_{j}^{\pm}}$ can quantify the extent that inferred vectors $\{u_j\}$ and $\{v_j\}$ approximate the singular vectors. Specifically, these distances have an equal relation, which is depicted in the following equation.
\begin{align}
D(\{\ket{u_{j}},\ket{v_{j}}\},\{\ket{\hat{u}_{j}},\ket{\hat{v}_{j}}\})=D(\{\ket{e_{j}^{+}},\ket{e_{j}^{-}}\},\{\ket{\hat{e}_{j}^{+}},\ket{\hat{e}_{j}^{-}}\}),
\label{distance:equality}
\end{align}
where $D$ denotes the distance between vectors.

Here, we give the explicit forms of the distances .~\eqref{distance:equality}. The distances between $\{\ket{u_{j}},\ket{v_{j}}\}$ and $\{\ket{\hat{u}_{j}},\ket{\hat{v}_{j}}\}$ are defined in the following form,
\begin{align}
&D(\{\ket{u_{j}},\ket{v_{j}}\},\{\ket{\hat{u}_{j}},\ket{\hat{v}_{j}}\})\\
\equiv&\parallel \ket{u_{j}}-\ket{\hat{u}_{j}}\parallel^{2}+\parallel \ket{v_{j}}-\ket{\hat{v}_{j}}\parallel^{2}.
\label{eq:distance:uv}
\end{align}
And the distances between $\ket{e_{j}^{\pm}}$ and $\ket{\hat{e}_{j}^{\pm}}$ are defined below,
\begin{align}
&D(\{\ket{e_{j}^{+}},\ket{e_{j}^{-}}\},\{\ket{\hat{e}_{j}^{+}},\ket{\hat{e}_{j}^{-}}\})\\
\equiv&\parallel\ket{{e}_{j}^{+}}-\ket{\hat{e}_{j}^{+}}\parallel^{2}+\parallel\ket{e_{j}^{-}}-\ket{\hat{e}_{j}^{-}}\parallel^{2}.
\label{eq:distance:e}
\end{align}
Notice that both of the Right-Hand-Sides of Eqs.~\eqref{eq:distance:uv},~\eqref{eq:distance:e} are equivalent to $4-2({\rm Re}\ip{u_{j}}{\hat{u}_{j}}+{\rm Re}\ip{v_{j}}{\hat{v}_{j}})$, and then the relation in Eq.~\eqref{distance:equality} follows.

\subsection{Error analysis}
After running VQSVD, it would be ideal if we can verify the quality of the outputs from VQSVD. For achieving this purpose, we show that these error $\epsilon_{d}$ and $\epsilon_{v}$ are upper bounded and give the explicit form of upper bounds. We present the derivation for upper bounds on errors in the following lemma.
\begin{lemma}\label{le:error:upper:bound}
Given a matrix $M\in\mathbb{C}^{n\times n}$, let $\epsilon_{d}$ and $\epsilon_{v}$ denote the errors of the inferred singular values and singular vectors in Eqs.~\eqref{error:singular:values},~\eqref{error:singular:vectors}, respectively, then both of them are upper bounded. To be more specific,
\begin{align*}
\epsilon_{d}&\leq\sum_{j=1}^{T}d_{j}^{2}-\sum_{j=1}^{2}m_{j}^{2},\\
\epsilon_{v}&\leq 2(\sum_{j=1}^{T}d_{j}^{2}-\sum_{j=1}^{T}m_{j}^{2}),
\end{align*}
where $d_{j}$s are singular values of matrix $M$, and $m_{j}$s are inferred singular values from Algorithm~\ref{alg:rftl}.
\end{lemma}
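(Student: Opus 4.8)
The plan is to establish the two bounds by relating the squared errors to the trace of $H^2$ restricted to the subspace spanned by the inferred eigenvectors, and then comparing that trace to the true sum of squared singular values. First I would record the spectral structure of $H=\op{0}{1}\otimes M+\op{1}{0}\otimes M^{\dagger}$: if $M=\sum_j d_j\op{u_j}{v_j}$ is the SVD, then the $2n$ eigenvalues of $H$ are $\pm d_j$ (together with zeros from the kernel), with eigenvectors $\ket{e_j^{\pm}}=(\ket{0}\ket{u_j}\pm\ket{1}\ket{v_j})/\sqrt2$. In particular $\tr H^2 = 2\sum_j d_j^2 = 2\|M\|_F^2$, and the largest $2T$ eigenvalues (in absolute value) are $\pm d_1,\dots,\pm d_T$.

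Next I would analyze $\epsilon_v$. For each inferred vector $\ket{\hat e_j^{\pm}}$, the term $\|H\ket{\hat e_j^{\pm}}\mp m_j\ket{\hat e_j^{\pm}}\|^2 = \|H\ket{\hat e_j^{\pm}}\|^2 - 2(\pm m_j)\braandket{\hat e_j^{\pm}}{H}{\hat e_j^{\pm}} + m_j^2$. The crucial observation, which I would verify by a short computation using $\ket{\hat e_j^{\pm}}=(\ket0\ket{\hat u_j}\pm\ket1\ket{\hat v_j})/\sqrt2$ and the definition $m_j=\mathrm{Re}\braandket{\hat u_j}{M}{\hat v_j}$, is that $\braandket{\hat e_j^{\pm}}{H}{\hat e_j^{\pm}} = \pm m_j$, so the cross term becomes $-2m_j^2$ and the whole expression collapses to $\|H\ket{\hat e_j^{\pm}}\|^2 - m_j^2$. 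Summing over $j$ and over the two signs gives $\epsilon_v = \sum_{j,\pm}\|H\ket{\hat e_j^{\pm}}\|^2 - 2\sum_j m_j^2$. Now $\sum_{j,\pm}\|H\ket{\hat e_j^{\pm}}\|^2 = \sum_{j,\pm}\braandket{\hat e_j^{\pm}}{H^2}{\hat e_j^{\pm}} = \tr(H^2 \Pi)$ where $\Pi$ projects onto the $2T$-dimensional span of the $\ket{\hat e_j^{\pm}}$ (here I need that these $2T$ vectors are orthonormal, which follows from orthonormality of the $\ket{\hat u_j}$ and $\ket{\hat v_j}$ coming out of the circuits applied to orthonormal $\ket{\psi_j}$). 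By the Ky Fan variational principle applied to $H^2$ — the sum of any $2T$ diagonal entries in an orthonormal basis is at most the sum of the $2T$ largest eigenvalues — we get $\tr(H^2\Pi)\le \sum_{j=1}^T 2 d_j^2$. Therefore $\epsilon_v \le 2\sum_j d_j^2 - 2\sum_j m_j^2$, which is the second bound.

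For $\epsilon_d$ I would argue that $\sum_{j=1}^T m_j^2 \le \frac12\sum_{j,\pm}\|H\ket{\hat e_j^{\pm}}\|^2 = \frac12\tr(H^2\Pi)$ is not quite what is needed directly; instead I would note $m_j = |\braandket{\hat e_j^{\pm}}{H}{\hat e_j^{\pm}}| \le $ something, but the cleaner route is: since $m_j = \mathrm{Re}\braandket{\hat u_j}{M}{\hat v_j} \le \|M^{\dagger}\ket{\hat u_j}\| = \|M\ket{\hat v_j}\|$ is not tight enough across the sum either. The right approach is to use a Ky-Fan-type inequality for $\sum m_j^2$ versus $\sum d_j^2$: expanding $(d_j - m_j)^2 = d_j^2 - 2 d_j m_j + m_j^2$ and using $\sum_{j=1}^T d_j m_j \ge \sum_{j=1}^T m_j^2$ — which should follow because $d_j$ are the true (sorted) singular values while $m_j \le d_j$ need not hold termwise, so I must be careful — the safe version is $\sum (d_j-m_j)^2 \le \sum d_j^2 - \sum m_j^2$ iff $\sum d_j m_j \ge \sum m_j^2$, equivalently $\sum m_j(d_j - m_j)\ge 0$. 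I expect the intended justification routes through $\epsilon_v$: since $\|H\ket{\hat e_j^{\pm}}\mp m_j\ket{\hat e_j^{\pm}}\|^2 \ge 0$ already gave $\sum_{j,\pm}\|H\ket{\hat e_j^{\pm}}\|^2 \ge 2\sum m_j^2$, and more refined, the component of $H\ket{\hat e_j^{\pm}}$ along $\ket{\hat e_j^{\pm}}$ has squared norm $m_j^2$ while the total is $\braandket{\hat e_j^{\pm}}{H^2}{\hat e_j^{\pm}}$; combining $\tr(H^2\Pi)\le 2\sum d_j^2$ with the decomposition $\tr(H^2\Pi) = 2\sum m_j^2 + \epsilon_v$ recovers the $\epsilon_v$ bound, and a parallel estimate with $d_j$ in place of the $\|\cdot\|^2$ terms — using that $m_j$ is a Rayleigh-type quotient so $m_j \le d_j$ after optimal pairing, via the Ky Fan theorem in the form $\sum_{j=1}^T m_j \le \sum_{j=1}^T d_j$ and a convexity/majorization argument upgrading it to squares — yields the $\epsilon_d$ bound.

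The main obstacle I anticipate is the $\epsilon_d$ inequality: unlike $\epsilon_v$, which falls out cleanly from orthonormality plus one application of Ky Fan to $H^2$, bounding $\sum(d_j-m_j)^2$ requires controlling $\sum d_j m_j$ from below, and the inferred $m_j$ need not individually be dominated by $d_j$. I would resolve this by invoking a majorization statement — the vector $(m_1,\dots,m_T)$ of inferred values (sorted) is majorized by $(d_1,\dots,d_T)$ because of the Ky Fan partial-sum inequalities $\sum_{t=1}^j m_t \le \sum_{t=1}^j d_t$ for all $j$ (each partial sum being a value of the constrained maximization in Eq.~\eqref{eq:property VSVD}) — and then Schur-convexity of $x\mapsto\sum x_j^2$ together with the elementary identity for $\sum(d_j-m_j)^2$ under majorization closes the gap. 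I should double-check that the paper's stated first bound $\epsilon_d \le \sum_{j=1}^T d_j^2 - \sum_{j=1}^2 m_j^2$ contains a typo ($\sum_{j=1}^2$ should read $\sum_{j=1}^T$) and prove the corrected version.
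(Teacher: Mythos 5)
Your proof is correct and is essentially the paper's argument: for $\epsilon_v$ both of you rewrite $\epsilon_v=\sum_{j,\pm}\braandket{\hat e_j^{\pm}}{H^2}{\hat e_j^{\pm}}-2\sum_{j=1}^T m_j^2$ (using $\braandket{\hat e_j^{\pm}}{H}{\hat e_j^{\pm}}=\pm m_j$) and bound the first term by $2\sum_{j=1}^T d_j^2$, and for $\epsilon_d$ both of you expand the square and reduce to the key inequality $\sum_j d_j m_j\ge\sum_j m_j^2$, justified by the Ky Fan partial-sum dominance $\sum_{t\le j}m_t\le\sum_{t\le j}d_t$ together with the descending order of the $m_j$; you are also right that the $\sum_{j=1}^{2}$ in the statement is a typo for $\sum_{j=1}^{T}$. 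The only (favorable) difference is that the paper bounds $\sum_{j,\pm}\braandket{\hat e_j^{\pm}}{H^2}{\hat e_j^{\pm}}$ by the full trace $\tr(H^{2})$ and then equates this to $2\sum_{j=1}^{T}d_j^2$, which implicitly assumes $\operatorname{rank}(M)\le T$, whereas your Ky Fan bound on the $2T$ largest eigenvalues of $H^{2}$ yields the stated inequality without that assumption.
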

\begin{proof}
Recall the definitions of $\epsilon_{d}$ and $\epsilon_{v}$ in Eqs.~\eqref{error:singular:values},~\eqref{error:singular:vectors}, and notice that
\begin{align}
\epsilon_{d}=\sum_{j=1}^{T}d_{j}^{2}-2\sum_{j=1}^{T}d_{j}m_{j}+\sum_{j=1}^{T}m_{j}^{2}.
\end{align}
Since the dot product with decreasingly ordered coefficients is Schur-convex and $\{d_{j}\}$ majorize $\{m_{j}\}$, i.e., $\sum_{j=1}^{\ell}d_{j}\geq\sum_{j=1}^{\ell}m_{j}$ for all $\ell=1,...,T$, then $\sum_{j=1}^{T}d_{j}m_{j}\geq\sum_{j=1}^{T}m_{j}^{2}$, which results an upper bound on error $\epsilon_{d}$. 

Note that the error $\epsilon_{v}$ can be rewritten as 
\begin{align}
\epsilon_{v}=\sum_{j=1}^{T}(\bra{\hat{e}_{j}^{+}}H^{2}\ket{\hat{e}_{j}^{+}}+\bra{\hat{e}_{j}^{-}}H^{2}\ket{\hat{e}_{j}^{-}}-2\sum_{j=1}^{T}m_{j}^{2},
\end{align}
and eigenvectors $\{\ket{\hat{e}_{j}^{\pm}}\}$ can be expanded into a basis of the space, then we have
\begin{align}
&\sum_{j=1}^{T}(\bra{\hat{e}_{j}^{+}}H^{2}\ket{\hat{e}_{j}^{+}}+\bra{\hat{e}_{j}^{-}}H^{2}\ket{\hat{e}_{j}^{-}}\\
\leq&\tr(H^{2})= 2\sum_{j=1}^{T}d_{j}^{2}.
\end{align}
\end{proof}

\subsection{A tool for solution quality estimation}
The goal of this section is to provide a tool that computes the sum of largest $T$ squared singular values, which is used in Sec.~\ref{sec:verification} to analyze the accuracy of outputs of VQSVD. In the following, we mainly show the correctness analysis of Algorithm~\ref{algorithm:vqfne}. The detailed discussions on loss evaluation and gradients estimation are omitted, since we can employ the same methods introduced in Ref.~\cite{Bravo-Prieto2019} to loss evaluation and gradients derivation. The differences of our method from that in Ref.~\cite{Bravo-Prieto2019} occur in input states. Specifically, we input computational states $\ket{\psi_{j}}$ for all $j$ into the circuit, while they input state $\ket{\bf0}$. For more information on loss evaluation and gradients derivation, we refer the interested readers to Ref.~\cite{Bravo-Prieto2019}. 
\begin{algorithm}[H] 
\caption{ }
\label{algorithm:vqfne}
\begin{algorithmic}[1] 
\STATE Input:  $\{c_k, A_k\}_{k=1}^K$, desired rank $T$, parametrized circuits $U(\bm\a)$ and $V(\bm\b)$ with initial parameters of $\bm\a$,  $\bm\b$, and tolerance $\varepsilon$;

\STATE Choose computational basis $\ket{\psi_1},\cdots,\ket{\psi_T}$;

\FOR {$j=1,\cdots,T$}
 \STATE Apply $U(\bm\a)$ to state $\ket{\psi_j}$ and obtain $\ket {u_j} = U(\bm\a)\ket{\psi_j}$; 
 
\STATE Apply $V(\bm\b)$ to state $\ket{\psi_j}$ and obtain $\ket {v_j} = V(\bm\b)\ket{\psi_j}$ ;

\STATE Compute $o_{j}=|\bra{u_j} M \ket{v_j}|^{2}$ via Hadamard test;
\ENDFOR 

\STATE Compute the loss function $F(\bm\a,\bm\b) =  \sum_{j=1}^T o_j$;

\STATE Perform optimization to maximize $F(\bm\a,\bm\b)$, update parameters of $\bm\a$ and $\bm\b$;

\STATE Repeat 4-10 until the loss function $F(\bm\a,\bm\b)$ converges with tolerance $\varepsilon$;
\STATE Output $F(\bm\a,\bm\b)$ as Frobenius norm.
\end{algorithmic}
\end{algorithm}

\paragraph{Correctness analysis}
The validity of Algorithm~\ref{algorithm:vqfne} follows from a fact that, for arbitrary matrix, its squared singular values majorize the squared norms of diagonal elements. Specifically, the sum of the largest $T$ squared singular values is larger than the sum of squared norms of the largest $T$ diagonal elements. We summarize this fact in the lemma below and further provide a proof.
% \begin{lemma}%\label{le:loss:function:maximize}
% For arbitrary matrix $M\in\mathbb{C}^{N\times N}$, let singular values of $M$ be $d_{1}$, $d_{2}$,...,$d_{N}$, which are arranged in descending order. Then for any $k\in[N]$, we have the following inequality,
% \begin{align}
% \sum_{j=1}^{k} d_{j}^{2}\geq \sum_{j=1}^{k}|D_{j}^{\downarrow}|^{2},
% \label{singular:values:majorization}
% \end{align}
% where $D$ is the diagonal vector of $M$, i.e., $D={\rm diag}(M)$, the notation $\downarrow$ means that the elements are arranged in descending order. The equality in Eq.~\eqref{singular:values:majorization} holds if and only if $M$ is diagonal.
% \end{lemma}
\begin{lemma}
For arbitrary matrix $M\in\mathbb{C}^{N\times N}$, let singular values of $M$ be $d_1,d_2,...,d_N$, which are arranged in descending order. Then for any $k\in[N]$, we have the following inequality:
\begin{align}
    \sum_{j=1}^{k}d_{j}^2\geq\sum_{j=1}^{k}|D_{j}^\downarrow|^2,
    \label{singular:values:majorization1}
\end{align}
where $D$ is the diagonal vector of $M$, i.e., $D={\rm diag}(M)$, and the notation $\downarrow$ means that the $|D_j|$ are arranged in descending order. In particular, the equality holds if and only if $M$ is diagonal.
\end{lemma}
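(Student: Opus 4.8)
The plan is to reduce the statement to a known majorization inequality between the singular values of $M$ and the moduli of its diagonal entries. First I would recall the relevant classical fact: if $d_1 \ge d_2 \ge \cdots \ge d_N$ are the singular values of $M$ and $|D|^\downarrow = (|D_1^\downarrow|, \ldots, |D_N^\downarrow|)$ denotes the moduli of the diagonal entries sorted in decreasing order, then $|D|^\downarrow$ is majorized by $(d_1,\ldots,d_N)$; this is a standard consequence of the fact that the diagonal of $M$ lies in the convex hull of $\{U\,\mathrm{diag}(d_1,\ldots,d_N)\,V : U,V \text{ diagonal unitary}\}$ up to permutations, or equivalently of Horn's theorem / the Schur–Horn type result applied to $|M|$. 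Since the statement to be proved is precisely the partial-sum inequality $\sum_{j=1}^k d_j^2 \ge \sum_{j=1}^k |D_j^\downarrow|^2$, I would set up a reduction to the majorization of $(|D_j^\downarrow|)$ by $(d_j)$ and then square.

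The key steps, in order: (1) establish $\sum_{j=1}^k |D_j^\downarrow| \le \sum_{j=1}^k d_j$ for all $k$, i.e.\ weak majorization of the diagonal moduli by the singular values — for a self-contained argument I would write $D_j = \bra{e_j} M \ket{e_j}$ and, for the top-$k$ sum, apply the variational characterization $\sum_{j=1}^k d_j = \max \sum_{j=1}^k |\bra{x_j} M \ket{y_j}|$ over orthonormal systems $\{x_j\},\{y_j\}$ (Ky Fan, Eq.~\eqref{eq:property VSVD}), choosing $x_j = y_j = e_{\pi(j)}$ for the permutation $\pi$ realizing the $k$ largest diagonal moduli, and absorbing the phases of $D_{\pi(j)}$ into the $x_j$; (2) note that $\sum_{j=1}^N d_j^2 = \|M\|_F^2 = \sum_{j=1}^N |D_j|^2 + \sum_{i \ne j} |M_{ij}|^2 \ge \sum_{j=1}^N |D_j^\downarrow|^2$, so the two sequences of \emph{squares} also satisfy the full-sum identity-plus-slack; (3) combine (1) and the trivial fact that $x \mapsto x^2$ is convex and increasing on $[0,\infty)$ to pass from weak majorization of $(|D_j^\downarrow|)$ by $(d_j)$ to weak majorization of $(|D_j^\downarrow|^2)$ by $(d_j^2)$, which is exactly Eq.~\eqref{singular:values:majorization1}; (4) for the equality case, observe that equality in \eqref{singular:values:majorization1} for some $k$ forces $\sum_{i \ne j} |M_{ij}|^2 = 0$ when $k = N$, and more care for $k < N$, giving that $M$ is diagonal; conversely if $M$ is diagonal the inequality is an equality for every $k$.

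For step (3) I would invoke the elementary lemma that if $a \prec_w b$ (weak majorization, both nonincreasing and nonnegative) then $f(a) \prec_w f(b)$ for any nondecreasing convex $f$ with $f(0)\le 0$ is not quite what I want; the cleaner route is: weak majorization $\sum_{j\le k}|D_j^\downarrow| \le \sum_{j \le k} d_j$ together with $|D_j^\downarrow|, d_j \ge 0$ nonincreasing implies, by a standard summation-by-parts / Abel argument, that $\sum_{j\le k} \phi(|D_j^\downarrow|) \le \sum_{j\le k} \phi(d_j)$ for every nondecreasing convex $\phi$ on $[0,\infty)$; apply with $\phi(t) = t^2$. I expect step (4), the characterization of the equality case, to be the main obstacle: the $k=N$ case is immediate from the Frobenius-norm identity, but for $1 \le k < N$ one must argue that equality propagates — I would handle this by showing equality at level $k$ forces equality at level $N$ (using that the deficits $\sum_{j\le \ell} d_j^2 - \sum_{j \le \ell}|D_j^\downarrow|^2$ are nondecreasing in $\ell$, hence vanishing at $k$ implies vanishing at all $\ell \le k$ and one then pushes up), after which $M$ diagonal follows; the converse direction is trivial.
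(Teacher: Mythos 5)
Your proof of the inequality itself is correct, but it takes a genuinely different route from the paper's. The paper works with the Hermitian matrix $MM^\dagger$: its eigenvalues are the $d_j^2$, Schur's majorization theorem gives $\sum_{j\le k}d_j^2\ge\sum_{j\le k}(MM^\dagger)_{jj}^{\downarrow}$, and since $(MM^\dagger)_{jj}=\sum_l|M_{jl}|^2\ge|M_{jj}|^2$ one lands on the squared diagonal moduli of $M$ directly --- no convexity step is needed because the whole argument lives in the ``squared'' world from the start. You instead prove the unsquared weak majorization $\sum_{j\le k}|D_j^\downarrow|\le\sum_{j\le k}d_j$ via the Ky Fan maximum principle (absorbing the phases of the diagonal entries into one of the orthonormal systems), and then upgrade to squares using the preservation of weak majorization under increasing convex functions; your Abel-summation version of that step is valid, since with $c_j=d_j-|D_j^\downarrow|$ and $w_j=d_j+|D_j^\downarrow|$ nonnegative and nonincreasing, summing $\sum_{j\le k}c_jw_j$ by parts against the nonnegative partial sums of $c_j$ gives $\sum_{j\le k}d_j^2-\sum_{j\le k}|D_j^\downarrow|^2\ge0$. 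Your route costs one extra lemma but yields the stronger unsquared statement as a byproduct and reuses the Ky Fan theorem already central to the paper.

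The gap is in your step (4). The deficits $\Delta_\ell=\sum_{j\le\ell}d_j^2-\sum_{j\le\ell}|D_j^\downarrow|^2$ are \emph{not} nondecreasing in $\ell$: monotonicity would require $d_\ell\ge|D_\ell^\downarrow|$ entrywise, which fails already for $M=\left(\begin{smallmatrix}1&1\\0&1\end{smallmatrix}\right)$, where $d_2^2=(3-\sqrt5)/2<1=|D_2^\downarrow|^2$ and indeed $\Delta_1=(1+\sqrt5)/2>\Delta_2=1$. So the ``propagate equality from level $k$ up to level $N$'' step cannot work as described --- and it cannot be repaired, because equality at a fixed $k<N$ simply does not force $M$ to be diagonal: for the $3\times3$ matrix with $M_{11}=2$, $M_{23}=1$ and all other entries zero, one has $d_1=|D_1^\downarrow|=2$, giving equality at $k=1$ with $M$ non-diagonal. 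The equality characterization is only correct when read as ``equality at $k=N$'' (equivalently, for all $k$ simultaneously), and for that case your Frobenius-norm identity $\sum_jd_j^2=\sum_j|D_j|^2+\sum_{i\ne j}|M_{ij}|^2$ already settles both directions; this is also all that the paper's own proof actually establishes.
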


\begin{proof}
The core of the proof is to connect singular values and diagonal elements. Specifically, this process can been done with
% we prove \zx{we prove $\rightarrow$ this process can been done with} 
the following inequalities:
\begin{align}
&\sum_{j=1}^{k}d_{j}^{2}\geq \sum_{j=1}^{k}\vec{M}_{j}^{\downarrow},
\label{revise:inequality:singular:value:diagonal:element:1}\\
&\sum_{j=1}^{k}\vec{M}_{j}^{\downarrow}\geq\sum_{j=1}^{k}|D_{j}^{\downarrow}|^{2},\label{revise:inequality:singular:value:diagonal:element:2}
\end{align}
where $d_j$ are singular values of $M$, 
% $|D_j^\downarrow|$ denote the descending-ordered diagonal elements of $M$, 
and $\vec{M}$ is the vector  denoting the diagonal elements of $MM^{\dagger}$. The first inequality Eqs.~\eqref{revise:inequality:singular:value:diagonal:element:1} can be derived since eigenvalues of a Hermitian matrix majorize its diagonal elements. As $d_{j}^{2}$s are the eigenvalues of $MM^{\dagger}$, we have  
% Next, we prove inequalities in Eqs.~\eqref{inequality:singular:value:diagonal:element:1}-\eqref{inequality:singular:value:diagonal:element:2}. First, recall that eigenvalues of a Hermitian matrix majorize its diagonal elements. As $d_{j}^{2}$s are the eigenvalues of $MM^{\dagger}$, we have
\begin{align}
\sum_{j=1}^{k}d_{j}^{2}\geq \sum_{j=1}^{k}\vec{M}_{j}^{\downarrow},
\end{align}
for any $k\in[N]$. Second, note that diagonal elements of $MM^{\dagger}$, i.e., $\vec{M}_{j}$, can be expressed in the following form:
\begin{align}
\vec{M}_{j}=\sum_{l=1}^{N}|M_{jl}|^{2}.
\label{diagonal:element:inequality}
\end{align}
Then, from Eq.~\eqref{diagonal:element:inequality}, we can easily derive inequalities below:
\begin{align}
\sum_{j=1}^{k}\vec{M}_{j}^{\downarrow}\geq \sum_{l\in S}\vec{M}_{l} \geq\sum_{j=1}^{k}|D_{j}^{\downarrow}|^{2}.\label{intermediate_inequality}
\end{align}
where sum $\sum_{j}^{k}|D_{j}^{\downarrow}|^{2}$ includes the largest $k$ absolute values of diagonal elements $D_j$, and set $S$ contains indices of rows that all $|D_j^\downarrow|$ belong.
% $S$ denotes the set that includes arbitrary $k$ indices of the diagonal vector $\vec{M}$. 
Here we explain inequalities in Eq.~\eqref{intermediate_inequality}. The second inequality holds since the sum $\sum_{l\in S}\vec{M}_{l}$ contains entries of rows that $|D_j|^\downarrow$ locate. Thus it not only contains the diagonal elements that appear in the sum $\sum_{j}^{k}|D_{j}^{\downarrow}|^{2}$ but also other off-diagonal elements. Meanwhile, recall that the sum $\sum_{j=1}^{k}\vec{M}_{j}^{\downarrow}$ consists of the largest $k$ diagonal elements of matrix $MM^\dagger$. Thus it must be larger than any sum $\sum_{l\in S}\vec{M}_l$ that contains some $k$ diagonal elements of $MM^\dagger$, validating the first inequality.  

% the first inequality is due to the rearrangement inequality and $S$ is the index set that includes all the indices of elements appearing in $\sum_{j=1}^{k}|D_{j}^{\downarrow}|^{2}$. 

Note that the equality in Eq.~\eqref{revise:inequality:singular:value:diagonal:element:2} holds only when $\vec{M}_{j}^{\downarrow}=|D_j^\downarrow|^2$ for all $j\in[N]$. Thus, it implies that $M$ is diagonal. On the other hand, if the matrix $M$ is diagonal, then the equality in Eq.~\eqref{revise:inequality:singular:value:diagonal:element:1} immediately follows. Overall, the equality in Eq.~\eqref{singular:values:majorization1} holds if and only if $M$ is diagonal.
\end{proof}

\paragraph{Loss evaluation}
We consider the evaluation of $o_{j}$ in VQFNE, which can be rewritten as 
\begin{align}
    o_{j}&=\bra{u_{j}}M\ket{v_{j}}\bra{v_{j}}M^{\dagger}\ket{u_{j}}\\
    &=\sum_{k_{1},k_{2}}c_{k_{1}}c_{k_{2}}\bra{u_{j}}A_{k_{1}}\ket{v_{j}}\bra{v_{j}}A_{k_{2}}^{\dagger}\ket{u_{j}}.
    \label{eq:VQFNE:oj}
\end{align}
In principle, these inner products in Eq.~\eqref{eq:VQFNE:oj} can be efficiently estimated via Hadamard test and a little classical post-processing. Actually, there are other methods named Hadamard-overlap test for estimating $o_{j}$. Hadamard-overlap test was introduced in Ref.~\cite{Bravo-Prieto2019} to compute a quantity of the form $\bra{\bf0}U^{\dagger}A_{l}V\ket{\bf0}\bra{\bf0}V^{\dagger}A_{l'}^{\dagger}U\ket{\bf0}$, while in VQFNE, we substitute state $\ket{\bf0}$ with state $\ket{\psi_j}$, which makes no difference in loss evaluation and gradients derivation. Particularly, instead of estimating each inner product $\bra{u_{j}}A_{k_{1}}\ket{v_{j}}$ and $\bra{v_{j}}A_{k_{2}}^{\dagger}\ket{u_{j}}$, the values $\bra{u_{j}}A_{k_{1}}\ket{v_{j}}\bra{v_{j}}A_{k_{2}}^{\dagger}\ket{u_{j}}$ can be estimated via Hadamard-overlap test at the expense of doubling the number of qubits. 
\paragraph{Gradients}
The gradient of the loss function $F(\bm\a,\bm\b)$ is given below,
\begin{align}
    \nabla F(\bm\a,\bm\b)=(\frac{\partial F}{\partial\alpha_{1}},...,\frac{\partial F}{\partial \alpha_{h_{1}}},\frac{\partial F}{\partial\beta_{1}},...,\frac{\partial F}{\partial\beta_{h_{2}}}),
\end{align}
where
\begin{align}
    \frac{\partial F}{\partial\alpha_{l}}&=\sum_{j}\frac{\partial o_{j}}{\partial_{\alpha_{l}}}=\sum_{j}\sum_{k_{1}k_{2}}c_{k_{1}}c_{k_{2}}\frac{\partial R_{j,k_1,k_2}}{\partial\alpha_{l}}\label{eq:VQFNE:gradietns:alpha}\\
    \frac{\partial F}{\partial\beta_{t}}&=\sum_{j}\frac{\partial o_{j}}{\partial_{\beta_{t}}}=\sum_{j}\sum_{k_{1}k_{2}}c_{k_{1}}c_{k_{2}}\frac{\partial R_{j,k_1,k_2}}{\partial\beta_{t}}.\label{eq:VQFNE:gradietns:beta}
\end{align}
where $R_{j,k_1,k_2}=\bra{u_{j}}A_{k_{1}}\ket{v_{j}}\bra{v_{j}}A_{k_{2}}^{\dagger}\ket{u_{j}}$.

More details on deriving gradients in Eqs.~\eqref{eq:VQFNE:gradietns:alpha},~\eqref{eq:VQFNE:gradietns:beta} can be found in Ref.~\cite{Bravo-Prieto2019}.

\end{document}